\numberwithin{equation}{section}
\newtheorem{thm}{Theorem}[section]
\newtheorem{rem}[thm]{Remark}
\newtheorem{rem1}[thm]{Remarks}
\newtheorem{lem}[thm]{Lemma}
\newtheorem{dfn}[thm]{Definition}
\newtheorem{cor}[thm]{Corollary}
\newtheorem{pro}[thm]{Proposition}
\newtheorem{as}[thm]{Assumption}
\newtheorem{ack}{Acknowledgement}
\title{Scattering theory for multistate Schr\"odinger operators}
\author{Sohei Ashida}
\address[S. Ashida]{Department of Mathematics, Graduate School of Science, Kyoto University, Kyoto, 606-8502, Japan}
\email[]{ashida@math.kyoto-u.ac.jp}
\begin{document}
\maketitle

\bibliographystyle{plain}

\begin{abstract}
We study multistate Schr\"odinger operators related to molecular dynamics. We consider potentials which do not necessarily decay and prove absence of the singular continuous spectrum and propagation estimates which mean the scattering at speed larger than a positive constant and decay of the state with potentials higher than considered energy at infinity. We also consider the multistate Schr\"odinger operators with many-body structures. We obtain the Mourre estimate and the minimal velocity estimate for the many-body operators. The lower bound of the velocity is determined by the distance between the energy and thresholds below the energy.
\end{abstract}
\section{Introduction}\label{firstsec}

In the study of molecular dynamics the Born--Oppenheimer approximation plays a central role. In the Born--Oppenheimer approximation several electronic energy levels which are smaller than the total energy for some positions of nuclei are considered. These electronic levels are functions of the nuclear coordinates $x$ and regarded as potentials for nuclei. It is expected that if the total energy is larger than an electronic level in a certain direction at infinity, then the nuclei scatter, that is, the distance between some nuclei tends to infinity. This seems to be true even if the classical trajectory with the total energy for some electronic level is trapped in a bounded region, because of the transition between electronic levels due to their interaction. We verify this intuition in this paper under some conditions.

In this paper we consider the multistate Schr\"odinger operator
$$P=\mathrm{diag}(P_1,\dots,P_m)+R(x,D_x),$$
where $\mathrm{diag}(A_1,\dots,A_m)$ is a diagonal matrix whose elements are $A_1,\dots,A_m$ and $R(x,D_x)$ is a matrix of first order differential operators. This kind of operators are obtained by the reduction scheme for many-body problem for electrons and nuclei called Born--Oppenheimer approximation (see Klein, Martinez, Seiler and Wang \cite{KMSW}, Martinez and Messirdi \cite{MM} and Martinez and Sordoni \cite{MS}). In the Born--Oppenheimer approximation a small parameter $h$ which is the ratio of electronic and nuclear mass appears in front of the Laplacian $\Delta$ and in $R$, and the asymptotics as $h\to 0$ of the positions of the resonances of $P$  (see, e.g., \cite{Kl,Ma,Na,Ba,GM,GM2,FMW,As}) and the time evolution of the initial state corresponding to resonances (see \cite{BrMa}) were studied. The structure of the scattering matrices for short range potentials and short range off-diagonal elements at nontrapping energy was also studied by \cite{BeMa,MNS}.

Here we plan to prove absence of the singular continuous spectrum and propagation estimates for potentials which do not necessarily decay at infinity. Our method is based on the Mourre estimate for $P$. To prove the Mourre estimate at energy $\lambda$ we assume either that each diagonal operator satisfies the Mourre estimate at $\lambda$ with the same conjugate operator $A=(x\cdot p+p\cdot x)/2$, or that potentials are sum of a function homogeneous of degree zero outside a compact set and a decaying function, where $\lambda$ is not the critical value of the homogeneous potentials and the gradients of the homogeneous potentials satisfy some condition at $\lambda$. We also assume weak decay of $R(x,D_x)$. In the case of homogeneous potentials, our method to prove the Mourre estimate is similar to that in Agmon, Cruz-Sampedro and Herbst \cite{ACH} which modifies the conjugate operator from the generator of dilations. We also consider multistate Schr\"odinger operators with many-body structures in which the off-diagonal elements may not decay in all directions. We describe the structure of the thresholds and obtain the Mourre estimate with positivity determined by the distance $d(\lambda)$ from energy $\lambda$ to the nearest threshold below $\lambda$.

As an application of the Mourre estimate we prove the low velocity estimate which means that for the initial state with energy $\lambda$ at which the Mourre estimate holds scatters at speed larger than some constant. The low velocity estimate is proved using the abstract theory in Skibsted \cite{Sk} which also gives the large velocity estimate. As a consequence of the low velocity estimate we can see that states with potentials larger than $\lambda$ at infinity decay as time passes. In the many-body case the lower bound of the velocity is given by the distance $d(\lambda)$.

In Section 2, we introduce our assumptions and state our main results. In Section 3, we prove the Mourre estimates for multistate Sch\"odinger operators. In section 4, applying the Mourre estimates we obtain the propagation estimates. We determine the essential spectrum of multistate Sch\"odinger operators in the Appendix.

\section{Assumptions and results}
We consider the $m$-state Schr\"odinger operator
\begin{equation}\label{myeq2.0}
P=\mathrm{diag}(P_1,\dots,P_m)+R(x,D_x),
\end{equation}
on $\mathcal H:=\bigoplus_{j=1}^mL^2(\mathbb R^n)$ where $P_j=-\Delta+V_j(x)$ with real valued functions $V_j$ and $R(x,D_x)=(r_{j,k}(x,D_x))_{1\leq j,k\leq m}$ is a symmetric matrix of first order differential operators. As for $R(x,D_x)$ we assume the following.
\begin{as}\label{as1}
$r_{jk}=\tilde r_{jk}(x)\cdot\nabla+\hat r_{j,k}(x)$ where for some $\rho>0$ and any $l\in \mathbb N$, $\hat r_{jk}\in C^{\infty}(\mathbb R^n,\mathbb C)$ satisfies $\lvert(x\cdot\nabla)^l\hat r_{jk}(x)\rvert=o(1)$ as $\lvert x\rvert\to \infty$ and $\tilde r_{jk}\in C^{\infty}(\mathbb R^n,\mathbb C^n)$ satisfies $\lvert(x\cdot\nabla)^l(\tilde r_{jk}(x))_i\rvert=o(1)$ as $\lvert x\rvert\to \infty$ where $(\tilde r_{jk}(x))_i$ is the $i$-th component of $\tilde r_{jk}(x)$.
\end{as}
As for $P_j$, fixing $\lambda\in \mathbb R$ we suppose one of the following assumptions.
\begin{as}\label{as2}
Let $\lambda$ be a real number. For $j=1,\dots,m$, $P_j$ and $\lambda$ satisfy
\begin{itemize}
\item[(1)]For any $l\in \mathbb N$, we have $(x\cdot\nabla)^lV_j(x)\in L^{\infty}(\mathbb R^n)$.
\item[(2)]Setting $A=(x\cdot p+p\cdot x)/2$ where $p=-i\nabla$, there exist $\gamma_j,\delta_j>0$ and a compact operator $K_j$ such that denoting by $E_{P_j}(I)$ the spectral projection for $P_j$ onto the interval $I_j=(\lambda-\delta_j,\lambda+\delta_j)$ we have
\begin{equation}\label{myeq2.1}
E_{P_j}(I_j)i[P_j,A]E_{P_j}(I_j)\geq \gamma_j E_{P_j}(I_j)+K_j.
\end{equation}
\end{itemize}
\end{as}

\begin{as}\label{as3}
Let $\lambda$ be a real number. For $j=1,\dots,m$, we have $V_j(x)=\tilde V_j(x)+W_j(x)$ where $\tilde V_j$ and $W_j$ are real valued and satisfy the following:
\begin{itemize}
\item[(1)]$\tilde V_j\in C^{\infty}(\mathbb R^n)$ is homogeneous of degree zero for $\lvert x\rvert\geq 1/2$ and $\lambda$ is not the critical values of the functions $S^{n-1}\ni\omega\mapsto\tilde V_j(\omega),\ j=1,\dots,m$.
\item[(2)]Let $\omega_k\in S^{n-1},\ k=1,\dots,N$ be the all directions such that $\tilde V_j(\omega_k)=\lambda$ for some $j$ and set $J_{\omega_k}:=\{j\in \mathbb N\ \vert\ 1\leq j\leq m,\ \tilde V_j(\omega_k)=\lambda\}$. Then
\begin{equation}\label{myeq2.1.1}
\nabla \tilde V_j(\omega_k)\cdot(\sum_{l\in J_{\omega_k}}\nabla \tilde V_l(\omega_k))>0,
\end{equation}
for any $j\in J_{\omega_k}$.
\item[(3)]$W_j\in C^{\infty}(\mathbb R^n)$ satisfies $\lvert \partial^{\alpha}W_j(x)\rvert =o(\lvert x\rvert^{-\lvert \alpha\rvert})$ for any $\alpha\in \mathbb N^n$. 
\end{itemize}
\end{as}

\begin{rem}
If $\# J_{\omega_k}=1$ then \eqref{myeq2.1.1} is equivalent to $\nabla \tilde V_j(\omega_k)\neq0$.
\end{rem}

Our main result is the following theorem.
\begin{thm}\label{main}
Suppose Assumption \ref{as1} and either Assumption \ref{as2} or \ref{as3}, then the following holds.\\
\begin{itemize}
\item[(i)]There exists $\delta_0>0$ such that in $I_0=(\lambda-\delta_0,\lambda+\delta_0)$ there are only finite number of eigenvalues of $P$ and there is no singular continuous spectrum of $P$. 
\item[(ii)]If $\lambda_0\in I_0$ is not an eigenvalue of $P$, then for $f\in C_0^{\infty}(\mathbb R)$ supported in a sufficiently small neighborhood of $\lambda_0$ and any $s'>s>0$, there exist constants $\lambda', \lambda''>0$ such that
$$\left(\phi\left(\frac{x^2}{t^2}<\lambda'\right)\mathbf 1\right)e^{-itP}f(P)(\langle x\rangle^{-s'}\mathbf 1)=O(t^{-s}),$$
$$\left(\phi\left(\frac{x^2}{t^2}>\lambda''\right)\mathbf 1\right)e^{-itP}f(P)(\langle x\rangle^{-s}\mathbf 1)=O(t^{-s}),$$
as $t\to +\infty$ where $\phi(\dotsm)$ is the indicator function for $\{x\vert\dots\}$ and $\mathbf 1$ is the $m$-dimensional identity matrix, so that $g(x)\mathbf 1=\mathrm{diag}(g(x),\dots,g(x))$ for a function $g(x)$.
\end{itemize}
\end{thm}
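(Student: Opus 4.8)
The plan is to derive both parts of Theorem~\ref{main} from one analytic input — a Mourre estimate for $P$ at $\lambda$ with a suitable conjugate operator — combined with the standard consequences of the Mourre method and the abstract propagation theory of Skibsted~\cite{Sk}. So the first step, which is the substance of Section~3, is to construct a self-adjoint operator $\tilde A$, equal to the generator of dilations $A=(x\cdot p+p\cdot x)/2$ under Assumption~\ref{as2} and a bounded modification of $A$ built from a suitable symbol in the spirit of \cite{ACH} under Assumption~\ref{as3}, such that $P$ is of class $C^2(\tilde A)$ (or at least $C^{1,1}(\tilde A)$) and such that for some $\gamma,\delta_0>0$ and a compact operator $K$,
\begin{equation*}
E_P(I_0)\,i[P,\tilde A]\,E_P(I_0)\ \geq\ \gamma\,E_P(I_0)+K,\qquad I_0=(\lambda-\delta_0,\lambda+\delta_0).
\end{equation*}
The regularity is where Assumption~\ref{as1} and condition~(1) of Assumptions~\ref{as2}/\ref{as3} are used: the iterated commutators $\mathrm{ad}_{\tilde A}^l(P)$ reduce, modulo the bounded symbolic modification, to $(x\cdot\nabla)^l$ applied to $V_j$ and to the coefficients of $R$, all bounded by hypothesis.

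Granting the displayed estimate, part~(i) is the standard output of abstract Mourre theory (as in \cite{ACH} and the references therein), valid verbatim on the Hilbert space $\mathcal H=\bigoplus_{j=1}^m L^2(\mathbb R^n)$: the estimate at $\lambda$ propagates to all of $I_0$ after possibly shrinking $\delta_0$, whence $P$ has at most finitely many eigenvalues in $I_0$, each of finite multiplicity, and no singular continuous spectrum in $I_0$; moreover for every $\lambda_0\in I_0$ which is not an eigenvalue one gets the limiting absorption principle, i.e.\ the boundary values of $(\langle x\rangle^{-s}\mathbf 1)(P-z)^{-1}(\langle x\rangle^{-s}\mathbf 1)$ for $s>1/2$ exist in operator norm as $z\to\lambda_0\mp i0$ and are locally bounded near $\lambda_0$. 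Choosing $f$ supported in a sufficiently small neighbourhood of such a $\lambda_0$ makes the compact term negligible, giving the strict Mourre estimate $f(P)\,i[P,\tilde A]\,f(P)\geq(\gamma/2)f(P)^2$.

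For part~(ii) I would feed exactly this data — a self-adjoint $\tilde A$, the $C^2(\tilde A)$-regularity of $P$, the strict Mourre estimate, and the limiting absorption principle — into the abstract propagation theory of \cite{Sk}, which (as recalled in the introduction) produces simultaneously a large velocity estimate and a minimal velocity estimate. Specializing these to $P=\mathrm{diag}(P_j)+R$ and to the observables $\phi(x^2/t^2\lessgtr c)\mathbf 1$ yields the two displayed bounds: $\lambda''$ may be taken to be any constant exceeding the ballistic propagation speed permitted by the energy cutoff $f(P)$, and $\lambda'>0$ is quantitatively controlled from below by the Mourre constant $\gamma$. In the homogeneous case of Assumption~\ref{as3}, and in the many-body refinement announced in the introduction, $\gamma$ — hence the lower velocity bound — is governed by $d(\lambda)$, the distance from $\lambda$ to the nearest threshold below it; from this one also reads off the decay of states localized where some channel potential exceeds $\lambda$.

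The principal obstacle is the Mourre estimate for $P$, not the abstract machinery. Since the off-diagonal operator $R$ couples the channels, one cannot simply take the orthogonal sum of the scalar Mourre estimates: one must show that, after spectral localization, replacing $E_P(I_0)$ by the diagonal spectral projections of the $P_j$ and then discarding $R$ costs only a compact (or $o(1)$) error, which is where the weak decay in Assumption~\ref{as1} enters, together with a virial-type argument. Under Assumption~\ref{as3} the difficulty is sharper: one must describe the set of thresholds generated by the coincidences $\tilde V_j(\omega)=\lambda$ and, at each such direction $\omega_k$, design the symbol defining $\tilde A$ so that $i[P,\tilde A]$ is positive on the joint spectral subspace of the coalescing channels $\{P_j:j\in J_{\omega_k}\}$ — this is precisely what forces the sign condition~\eqref{myeq2.1.1} — while checking that the modification does not destroy $C^2(\tilde A)$. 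A secondary technical point is converting the abstract observables of \cite{Sk} into the geometric cutoffs $\phi(x^2/t^2\lessgtr c)\mathbf 1$, which needs commutator estimates relating $\tilde A$, $x^2/t$ and $p^2$ on $\mathrm{Ran}\,f(P)$.
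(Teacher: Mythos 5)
Your roadmap — a Mourre estimate for $P$ with a suitable conjugate operator, followed by Skibsted's abstract propagation theory — coincides with the paper's approach, and you correctly identify the two central difficulties: handling the off-diagonal coupling $R$, and designing the conjugate operator under Assumption~\ref{as3} so that positivity holds on the joint spectral subspace of the channels $\{j\in J_{\omega_k}\}$, which is exactly what \eqref{myeq2.1.1} buys. However, two points in your description are substantively off.

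First, under Assumption~\ref{as3} the conjugate operator $A_V=i[-\Delta,a]$ with $a(x)=(1-2\beta\sum_j\tilde V_j\chi_j)\lvert x\rvert^2/4$ is \emph{not} a bounded modification of $A=(x\cdot p+p\cdot x)/2$. Since $\tilde V_j\chi_j$ is homogeneous of degree $0$ for $\lvert x\rvert\geq1/2$, one has $\nabla a-x/2=O(\lvert x\rvert)$, so $A_V-A=b(x)\cdot p+p\cdot b(x)$ with $b(x)=O(\lvert x\rvert)$ is a first-order operator of the same homogeneity as $A$ itself, small only in the sense that its coefficients carry the factor $\beta$. Calling it a "bounded modification" would suggest one can treat it perturbatively in the resolvent sense; that is not how the paper (following Agmon--Cruz-Sampedro--Herbst) proceeds. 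The positivity in \eqref{myeq3.00} comes from a genuine reworking of the dilation commutator.

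Second, you treat the passage from the Mourre estimate to the two displayed propagation bounds as essentially turnkey, calling the conversion into the geometric cutoffs $\phi(x^2/t^2\lessgtr c)\mathbf 1$ a "secondary technical point." In the paper this is where most of the work in Section~4 lives, and crucially the off-diagonal coupling reappears there, not just in the Mourre estimate. The proof builds the $\tau$-dependent conjugate family $\mathcal A'(\tau)=\mathrm{diag}(g(-\tau M,\tau),\dots)$ with $M=(\gamma_0''/2-a(x)/\tau^2)^{1/2}$ and, after computing $D\mathcal A'(\tau)$, must control
$$\mathcal R=\tfrac{2i}{\tau}[R(x,D_x),a(x)\mathbf 1],$$
whose entries are $\tfrac{2i}{\tau}\tilde r_{jk}(x)\cdot\nabla a(x)$. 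Only on $\mathrm{supp}\,\hat\chi(-M)$ is $\tau^{-1}\lvert\nabla a\rvert$ uniformly bounded (estimate \eqref{myeq4.1.2}), and then the decay of $\tilde r_{jk}$ from Assumption~\ref{as1} is used once more — together with a spectral localization to shrink the resulting compact operator below the Mourre gap $\gamma_0'-\gamma_0''$ — before Skibsted's Corollary~2.6 can be invoked. Your proposal names neither the appearance of $[R,a\mathbf 1]$ at this stage nor the mechanism (compactness plus energy localization) by which it is absorbed; as written, the argument would not close. Finally, a small scope issue: the sentence linking $\gamma$ to $d(\lambda)$ and thresholds belongs to Theorem~\ref{main2}, not to Theorem~\ref{main}, and should not be folded into this proof.
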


\begin{cor}\label{maincor}
Suppose Assumption \ref{as1}, either Assumption \ref{as2} or \ref{as3} and
$$\lvert \tilde r_{jk}(x)\rvert,\lvert \hat r_{jk}(x)\rvert=O(\lvert x\rvert^{-\rho})$$
for some $\rho>0$. Let $\lambda_0$ be a number as in Theorem \ref{main} (ii).
Then if $\liminf_{x\to\infty}V_j(x)>\lambda_0$, for any $s,s'>0$ satisfying $s<s'$ and $s\leq \rho$ and any $f\in C_0^{\infty}(\mathbb R)$ supported in a sufficiently small neighborhood of $\lambda_0$ we have
$$E_{jj}e^{-itP}f(P)(\langle x\rangle^{-s'}\mathbf 1)=O(t^{-s}),$$
as $t\to +\infty$ where $E_{jj}$ is the matrix whose element in $j$-th row and $j$-th column is $1$ and the others are $0$.
\end{cor}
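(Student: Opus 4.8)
The plan is to derive the corollary from Theorem \ref{main} (ii) by exploiting the fact that, on the range of $E_{jj}$, the operator $P$ "looks like" $P_j$ up to the weakly decaying coupling $R$, and that the potential $V_j$ exceeds $\lambda_0$ at infinity. First I would fix $f\in C_0^\infty(\mathbb R)$ supported near $\lambda_0$ so small that $\liminf_{|x|\to\infty}V_j(x)>\sup(\operatorname{supp} f)$; then there is a compact set $B\subset\mathbb R^n$ and $\varepsilon>0$ with $V_j(x)\geq \lambda_0+2\varepsilon$ on $B^c$ and $\operatorname{supp} f\subset(\lambda_0-\varepsilon,\lambda_0+\varepsilon)$. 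Write $\chi$ for a smooth cutoff equal to $1$ on a large ball and $0$ near infinity, with $1-\chi$ supported in $B^c$. The first step is to insert the large/small velocity projections of Theorem \ref{main}(ii): choosing $\lambda',\lambda''$ as there, the contributions of $\phi(x^2/t^2<\lambda')$ and $\phi(x^2/t^2>\lambda'')$ applied to $e^{-itP}f(P)\langle x\rangle^{-s'}$ are both $O(t^{-s})$ (for the large-velocity term one loses nothing since $s\leq s'$ and the estimate is stated with $\langle x\rangle^{-s}$, which is weaker than $\langle x\rangle^{-s'}$). So it remains to control the "middle-velocity" part, where $\lambda' t^2\leq x^2\leq \lambda'' t^2$, and in particular $|x|\geq \sqrt{\lambda'}\,t\to\infty$; on this region, for $t$ large, $1-\chi(x)=1$, i.e.\ we are in the region where $V_j\geq\lambda_0+2\varepsilon$.

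The key step is then an elliptic/energy-localization estimate showing that $E_{jj}f(P)$ is concentrated, modulo fast-decaying errors, in the region $\{\chi=1\}$. Concretely I would show that $(1-\chi)(x)E_{jj}f(P)$ is "small": writing $g$ for a function in $C_0^\infty(\mathbb R)$ with $gf=f$ and localized near $\lambda_0$, one has $(1-\chi)E_{jj}(P-z)^{-1}$ controlled by first commuting $(1-\chi)E_{jj}$ through $(P-z)^{-1}$ and using that, on $\operatorname{Ran}E_{jj}$ over the support of $1-\chi$, the diagonal entry $P_j=-\Delta+V_j$ has $V_j\geq \lambda_0+2\varepsilon$ while the spectral parameter sits near $\lambda_0$. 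Since $R$ is only a perturbation by first-order operators with $o(1)$ coefficients (Assumption \ref{as1}), and $R$ maps $\operatorname{Ran}E_{jj}$ into the other components where no such lower bound holds, the cleanest route is: use the Combes--Thomas / exponential-decay type argument, or more simply an iterated commutator expansion of $f(P)$ via the Helffer--Sj\"ostrand formula, to show that for every $N$,
\begin{equation}\label{eq:propE}
\bigl\|\,(1-\chi)(x)\,E_{jj}\,f(P)\,\langle x\rangle^{N}\,\mathbf 1\,\bigr\|<\infty,
\end{equation}
so that $(1-\chi)(x)E_{jj}f(P)$ already gains arbitrary polynomial decay in $\langle x\rangle$; this uses $\rho>0$ together with $s\leq\rho$ only at the very end. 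Combining \eqref{eq:propE} with the middle-velocity localization $|x|\gtrsim t$ converts the spatial decay $\langle x\rangle^{-N}$ into the time decay $t^{-s}$, and combining with the first step (large/small velocity) completes the estimate.

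The main obstacle I anticipate is precisely \eqref{eq:propE}: the off-diagonal coupling $R$ does not decay, so one cannot argue componentwise that $f(P)$ preserves the decay of $V_j$ on $\operatorname{Ran}E_{jj}$ — the coupling mixes the component where $V_j$ is large with components where the potential may be at or below $\lambda_0$. The resolution is that the statement \eqref{eq:propE} only requires decay after applying $E_{jj}$ on the \emph{left} and cutting off by $1-\chi$ on the left as well; one should expand $f(P)=f(\operatorname{diag}(P_1,\dots,P_m))+\text{(corrections from }R\text{)}$ via an almost-analytic extension $\tilde f$, write the correction terms as norm-convergent series in resolvents $(\operatorname{diag}(P_k)-z)^{-1}$ separated by factors of $R$, and observe that the leftmost resolvent $(P_j-z)^{-1}$ with $z$ near $\lambda_0$ combined with the spatial cutoff $1-\chi$ (where $V_j\geq\lambda_0+2\varepsilon$) produces, by the standard elliptic estimate for $-\Delta+V_j-z$ away from its spectrum in that region, an operator that maps $\langle x\rangle^{N}$-weighted spaces boundedly; the $o(1)$ bounds on the coefficients of $R$ in Assumption \ref{as1} ensure the series converges and the weights can be moved across. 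Once \eqref{eq:propE} is in hand, the rest is the routine combination with Theorem \ref{main}(ii) sketched above, together with the observation that where $\chi=1$ we are inside a ball and the velocity estimates of Theorem \ref{main}(ii) already handle $e^{-itP}f(P)\langle x\rangle^{-s'}$ outside any fixed ball for large $t$.
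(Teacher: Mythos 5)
Your approach diverges from the paper's and contains a genuine gap at the crucial step \eqref{eq:propE}. The paper avoids the spatial cutoff $1-\chi$ entirely. It picks $W\in C_0^\infty(\mathbb R^n)$ with $\inf(V_j+W)>\lambda_0$, sets $\hat P:=\mathrm{diag}(P_1,\dots,P_j+W,\dots,P_m)$, and writes $E_{jj}e^{-itP}f(P)\langle x\rangle^{-s'}\mathbf 1 = E_{jj}f_1(\hat P)(\dots) + E_{jj}(f_1(P)-f_1(\hat P))(\dots)$. The first term vanishes exactly, because $\hat P$ is block diagonal so $f_1(\hat P)=\mathrm{diag}(f_1(P_1),\dots,f_1(P_j+W),\dots)$ and $f_1(P_j+W)=0$ when $\mathrm{supp}\,f_1$ is below $\inf\sigma(P_j+W)$. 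For the second term, a single application of the Helffer--Sj\"ostrand resolvent identity gives one factor of $R-WE_{jj}=O(\langle x\rangle^{-\rho})$, hence $(f_1(P)-f_1(\hat P))\langle x\rangle^{\rho}\mathbf 1\in\mathcal L(\mathcal H)$, and it remains only to show $(\langle x\rangle^{-\rho}\mathbf 1)e^{-itP}f(P)\langle x\rangle^{-s'}=O(t^{-s})$, which follows by splitting with $\phi(x^2/t^2<\lambda')$ (low velocity, $O(t^{-s})$) and its complement (where $\langle x\rangle^{-\rho}\phi(x^2/t^2>\lambda')=O(t^{-\rho})$, and $s\le\rho$).

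The gap in your argument: the estimate \eqref{eq:propE}, arbitrary polynomial decay for $(1-\chi)E_{jj}f(P)\langle x\rangle^{N}$, cannot hold. The off-diagonal coupling $R$ only decays like $\langle x\rangle^{-\rho}$, so the correction terms $f(P)-f(\mathrm{diag}(P_1,\dots,P_m))$ (after applying $E_{jj}(1-\chi)$ on the left) gain exactly one power $\langle x\rangle^{-\rho}$ per resolvent iteration, and the iterated expansion you propose does not telescope into arbitrary decay without further smallness assumptions on $R$ (the series you invoke need not converge, since $R(z-P_0)^{-1}$ is not small for $z$ near the real axis). Indeed, if \eqref{eq:propE} held one could push the time decay past $t^{-\rho}$ and remove the constraint $s\le\rho$; the paper's Remark following Corollary \ref{maincor} explicitly says this restriction does not seem removable. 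There is a second issue: your elliptic/forbidden-region estimate for $(1-\chi)(z-P_j)^{-1}$ near $z=\lambda_0$ implicitly requires that $\lambda_0$ is not in the spectrum of $P_j$; but $\lambda_0$ is only assumed not to be an eigenvalue of $P$, and $P_j$ can have an eigenvalue at or near $\lambda_0$ (it may well have discrete spectrum below $\liminf V_j$). The paper sidesteps both issues by modifying $V_j$ globally so that $P_j+W$ has no spectrum near $\lambda_0$ at all, making $f_1(\hat P_j)=0$ with no elliptic estimate, no exponential localization, and no iterated expansion needed.
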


\begin{rem1}
(1) The following example explains the results in a typical situation. We assume $m=2$. If $V_1(x)=-C\langle x\rangle^{-1}$ where $C>0$, then the interval $(-\infty,0)$ consists of only discrete spectrum and resolvent set. If $\lambda<0$, Assumption \ref{as2} holds for $P_1$ since letting $\delta_1$ be small enough and $f_1\in C_0^{\infty}(\mathbb R^n)$ be a function such that $\mathrm{supp} f_1\subset (-\infty,0)$ and $E_{P_1}(I_1)f_1(P_1)=E_{P_1}(I_1)$, using the almost analytic extension (see the proof of Theorem \ref{Mourre2}) we can see that $f_1(P_1)-f_1(-\Delta)$ is compact and $f_1(-\Delta)=0$. Moreover, if $V_2\equiv c<\lambda$ with a constant $c$, it is easy to confirm Assumption \ref{as2} for $P_2$. Thus choosing $\lambda_0$ as in Theorem \ref{main}, $u\in \mathrm{Ran}\, f(P)(\langle x\rangle^{-s'}\mathbf{1})$ scatters by the first estimate of Theorem \ref{main} (ii) for $f\in C_0^{\infty}(\mathbb R)$ supported in sufficiently small neighborhood of $\lambda_0$. This is obvious if $R(x,D_x)\equiv 0$, since in that case $u_1\equiv 0$ and $u_2$ scatters freely. For in this case $\lambda_0\notin\sigma_{\mathrm{pp}}(P)=\sigma_{\mathrm{pp}}(P_1)$, and therefore
$$u=^t(u_1,u_2)\in \mathrm{Ran}\, f(P)=\mathrm{Ran}\, (\mathrm{diag}(f(P_1),f(P_2)))=\mathrm{Ran}\, (\mathrm{diag}(0,f(P_2))).$$
However, when $R(x,D_x)\neq 0$, $u_1$ may not be $0$. In this case as we can see by Corollary \ref{maincor}, $u_1$ vanishes being converted into $u_2$ by the interaction between electronic levels and $u_2$ scatters as in Theorem \ref{main} (ii).\\
(2) The restriction $s\leq\rho$ can not seem to be removed, since the decay rate of the interactions between the components of $u\in \mathcal H$ by the off-diagonal terms is estimated as $O(t^{-\rho})$ by Theorem \ref{main} (ii).
\end{rem1}

Next we consider multisate Sch\"odinger operators with many-body structures. In the theory of the Born--Oppenheimer approximation, since more than two nuclei can share electrons, there seem to be interactions among more than two nuclei. Therefore, it seems natural to assume that $P$ has a generalized many-body structure in which the interactions among any number of particles can be considered. To describe the generalized many-body structure we introduce some notations (see, e.g., Derezi\'nski and G\'erard \cite[chapter 5]{DG}).

Set $X=\mathbb R^n$ and suppose that
\begin{equation}\label{myeq2.2}
\{X_b\ \vert\ b\in \mathbb B\}
\end{equation} 
is a finite family of subspaces of $X$. Let
\begin{equation}\label{myeq2.3}
\{X_a\ \vert\ a\in \mathbb A\}
\end{equation}
be the smallest family of subspaces  of $X$ satisfying the following conditions:
\begin{itemize}
\item[(1)] $X$ belongs to \eqref{myeq2.3};
\item[(2)] the family \eqref{myeq2.3} is closed with respect to intersection;
\item[(3)] the family \eqref{myeq2.2} is contained in \eqref{myeq2.3}.
\end{itemize}
We endow $\mathbb A$ with a semi-lattice structure by
$$a\leq b\ \mathrm{if}\ X_a\supset X_b.$$
We denote the minimal and maximal elements in $\mathbb A$ by $a_{\min}$ and $a_{\max}$, that is
$$X_{a_{\min}}=X,\ X_{a_{max}}=\bigcap_{a\in \mathbb A}X_a.$$
We assume that $X_{a_{\max}}=\{0\}$.
We denote the orthogonal complement of $X_a$ by $X^a$. We denote by $\Pi^a$ and $\Pi_a$ the orthogonal projections of $X$ onto $X^a$ and $X_a$ respectively. We use the same notations $\Pi^a$ and $\Pi_a$ for the corresponding orthogonal projections of the dual space of $X$. We define for all $x\in X$, $x_a=\Pi_a x$ and $x^a=\Pi^a x$. We also define $\nabla_a=\Pi_a\nabla$ and $\nabla^a=\Pi^a\nabla$. The operators $-\Delta_a$ and $-\Delta^a$ denote the Laplacian in $X_a$ and $X^a$ respectively. 

We assume that for every $j=1,\dots,m$ and $b\in \mathbb B$ we are given real functions $X^b\ni x^b\mapsto v_j^b(x^b)$ and $v_j^{a_{\min}}=0$. Let $c_1,\dots,c_m$ be real numbers. We set
\begin{align*}
&V_j(x):=\sum_{b\in \mathbb B}v_j^b(x^b)+c_j,\\
&V_j^a(x^a):=\sum_{b\leq a}v_j^b(x^b)+c_j.
\end{align*}
Denoting the complexification of $X^b$ by $(X^b)^C$, for any $j,k=1,\dots,m$ let $\tilde r_{jk}^b(x^b)$ (resp., $\hat r_{jk}^b(x^b)$) be $(X^b)^C$ valued (resp., $\mathbb C$ valued) function. We assume $\tilde r_{jk}^{a_{\min}}=0,\ \hat r_{jk}^{a_{\min}}=0$. We set 
\begin{align*}
&R(x,D_x)=(r_{jk}(x,D_x))_{1\leq j,k\leq m},\ r_{j,k}(x,D_x)=\sum_{b\in \mathbb B}(\tilde r_{jk}^b(x^b)\cdot \nabla^b+\hat r_{jk}^b(x^b)),\\
&R^a(x^a,D_{x^a})=(r^a_{jk}(x^a,D_{x^a}))_{1\leq j,k\leq m},\ r^a_{jk}(x^a,D_{x^a})=\sum_{b\leq a}(\tilde r_{jk}^b(x^b)\cdot \nabla^b+\hat r_{jk}^b(x^b)).
\end{align*}

Using these notations we define $P$ as \eqref{myeq2.0} and $P^a$ for $a\neq a_{\min}$ as follows:
$$P^a:=\mathrm{diag}(P_1^a,\dots,P_m^a)+R^a(x^a,D_{x^a}),$$
where $P_j^a:=-\Delta^a+V_j^a$. We suppose the following assumptions on $v_j^b$, $\tilde r_{jk}^b$ and $\hat r_{jk}^b$.
\begin{as}\label{as4}
We assume for any $j,k=1,\dots,m,\ b\in \mathbb B$, $v_j^b,\hat r_{jk}^b\in C^{\infty}(X^b,\mathbb C)$ and $\tilde r_{jk}^b\in C^{\infty}(X^b,(X^b)^C)$. Moreover, we assume for any $l\in \mathbb N$, $\lvert(x^b\cdot\nabla^b)^l v_j^b(x^b)\rvert=o(1)$, $\lvert(x^b\cdot\nabla^b)^l \hat r_{jk}^b(x^b)\rvert=o(1)$ and $\lvert(x^b\cdot\nabla^b)^l (\tilde r_{jk}^b)_i(x^b)\rvert=o(1)$ as $\lvert x^b\rvert\to \infty$ where $(\tilde r_{jk}^b)_i,\ i=1,\dots,\mathrm{dim}X^b$ is the $i$-th component of $\tilde r_{jk}^b$ with respect to some basis of $(X^b)^C$.
\end{as}

We define the set of thresholds for $a\neq a_{\min}$ as
$$\mathcal T^a:=\bigcup_{a_{\min}<b<a}\sigma_{\mathrm{pp}}(P^b)\cup\{c_1,\dots,c_m\},$$
where $\sigma_{\mathrm{pp}}(Q)$ is the set of pure point spectrum of $Q$ and $b<a$ means $b\leq a$ and $b\neq a$. The set $\mathcal T^{a_{\mathrm{max}}}$ is simply denoted by $\mathcal T$. Moreover, we set $\Sigma^a:=\inf(\mathcal T^a)$ and $\Sigma:=\Sigma^{a_{\mathrm{max}}}=\inf(\mathcal T)$. Then exactly as in the proof of HVZ theorem (see, e.g., \cite[Theorem 6.2.2]{DG}) we have the following proposition.

\begin{pro}
Suppose Assumption \ref{as4}. Then we have 
$$\sigma_{\mathrm{ess}}(P)=[\Sigma,\infty),$$
where $\sigma_{\mathrm{ess}}(Q)$ is the essential spectrum of $Q$.
\end{pro}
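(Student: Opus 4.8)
The plan is to mimic the classical proof of the HVZ theorem for $N$-body Schr\"odinger operators, adapting it to the multistate (matrix-valued) setting with the generalized many-body structure $\{X_a \mid a \in \mathbb A\}$. The key point is that the only new feature here is that $\mathcal H = \bigoplus_{j=1}^m L^2(X)$ carries an $m$-component structure and $R(x,D_x)$ is a matrix of first-order operators, but the geometric decomposition into channels indexed by $a \in \mathbb A$ is exactly as in the scalar case. So I would first verify the easy inclusion $[\Sigma,\infty) \subset \sigma_{\mathrm{ess}}(P)$ by constructing Weyl sequences: for each $a \neq a_{\min}$ and each pure point eigenvalue $\mu \in \sigma_{\mathrm{pp}}(P^a)$ with normalized eigenvector $\psi^a \in \bigoplus_j L^2(X^a)$, and each $\theta \geq 0$, form the trial states $\psi^a \otimes g_R(x_a) e^{i x_a \cdot \xi}$ with $|\xi|^2 = \theta$ and $g_R$ an $L^2$-normalized bump supported far out in $X_a$; using Assumption \ref{as4} (namely $v_j^b, \hat r_{jk}^b, \tilde r_{jk}^b \to 0$ at infinity in $X^b$, and $v_j^b = \hat r_{jk}^b = \tilde r_{jk}^b = 0$ for $b = a_{\min}$) one checks $(P - \mu - \theta)(\psi^a \otimes g_R e^{ix_a\cdot\xi}) \to 0$ while the states go weakly to zero, so $\mu + \theta \in \sigma_{\mathrm{ess}}(P)$. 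Since every element of $[\Sigma,\infty)$ is of this form with $\mu \in \mathcal T$, $\theta \geq 0$, this gives one inclusion. I should also note $c_j + \theta$ is covered by taking the fully clustered decomposition.

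For the reverse inclusion $\sigma_{\mathrm{ess}}(P) \subset [\Sigma,\infty)$, I would use the Ruelle--Simon type partition of unity $\{\chi_a\}_{a \in \mathbb A}$ on $X$, homogeneous of degree zero outside a ball, subordinate to the conical neighborhoods where $|x^a|$ is small compared to $|x|$ (so that on $\mathrm{supp}\,\chi_a$, $x$ is close to $X_a$), with $\chi_{a_{\min}}$ supported near the "free" region. The IMS localization formula gives
\begin{equation*}
P = \sum_{a} \chi_a P \chi_a - \sum_a |\nabla \chi_a|^2 \mathbf 1,
\end{equation*}
where the second term is $O(|x|^{-2})$ hence compact relative to $P$, so it does not affect the essential spectrum. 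On $\mathrm{supp}\,\chi_a$ the difference $P - (\mathrm{diag}(P_1^a,\dots,P_m^a) \oplus (-\Delta_a) \text{-type decomposition}) $ consists of the potentials $v_j^b(x^b)$ and off-diagonal pieces with $b \not\leq a$, which are evaluated at points where $|x^b| \geq |x^a| \to \infty$ outside any compact set, hence tend to $0$ by Assumption \ref{as4}; thus $\chi_a(P - \tilde P^a)\chi_a$ is compact, where $\tilde P^a = P^a \otimes 1 + 1 \otimes(-\Delta_a)$ acting on $(\bigoplus_j L^2(X^a)) \otimes L^2(X_a)$. Since $\inf \sigma(\tilde P^a) = \inf\sigma(P^a) \geq \Sigma^a \geq \Sigma$ for $a \neq a_{\min}$ (and for $a_{\min}$ one gets $\inf\sigma(-\Delta + \min_j c_j) = \min_j c_j \geq \Sigma$), one concludes $\chi_a P \chi_a \geq (\Sigma - \epsilon)\chi_a^2 + (\text{compact})$ for any $\epsilon > 0$, and summing over $a$ gives $P \geq \Sigma - \epsilon + (\text{compact})$, whence $\sigma_{\mathrm{ess}}(P) \subset [\Sigma - \epsilon, \infty)$ for all $\epsilon$.

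This is really an induction on the semi-lattice $\mathbb A$: the statement $\sigma_{\mathrm{ess}}(P^a) = [\Sigma^a,\infty)$ is proved for all $a$ by induction, the base case being $a$ minimal with $P^a = -\Delta^a + \mathrm{diag}(c_1,\dots,c_m)$ (plus the convention $v_j^{a_{\min}} = 0$, $r^{a_{\min}}_{jk}=0$), whose spectrum is $[\min_j c_j, \infty)$, and the inductive step using the above localization with the thresholds $\mathcal T^a = \bigcup_{a_{\min} < b < a} \sigma_{\mathrm{pp}}(P^b) \cup \{c_1,\dots,c_m\}$; the point is that in the localization argument for $P^a$ only operators $P^b$ with $b < a$ appear, for which the essential spectrum is already known. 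I expect the main obstacle to be purely technical: carefully verifying that the off-diagonal first-order operators $\tilde r_{jk}^b(x^b) \cdot \nabla^b$ produce relatively compact errors under the cutoffs $\chi_a$ — one needs $\tilde r_{jk}^b(x^b) \to 0$ (given) together with $\nabla$ being $P$-bounded, and a commutator estimate $[\chi_a, \tilde r_{jk}^b \cdot \nabla^b]$ which is $O(|x|^{-1})$ times a bounded operator — and symmetrically checking that the Weyl sequences in the first part survive the application of these first-order off-diagonal terms, which works because differentiating the slowly-oscillating factor $e^{ix_a\cdot\xi}$ lands in $X_a$, orthogonal to the $X^b$ directions that feed $\tilde r_{jk}^b$ when $b \leq a$, while for $b \not\leq a$ the coefficient is small at infinity. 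Everything else is a routine transcription of \cite[Theorem 6.2.2]{DG} to the bundle $\bigoplus_{j=1}^m L^2$.
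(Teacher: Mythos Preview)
Your proposal is correct and follows exactly the approach the paper indicates: the paper does not give a detailed proof but simply states that the result follows ``exactly as in the proof of HVZ theorem'' (referring to \cite[Theorem 6.2.2]{DG}), adding only the remark that one should identify $\mathcal H$ with $L^2(X_a)\otimes\bigl(\bigoplus_{j=1}^m L^2(X^a)\bigr)$ when constructing the Weyl sequences --- precisely the tensor-product identification you use. Your sketch of the IMS/Ruelle--Simon localization, the induction on the semi-lattice, and the handling of the off-diagonal first-order terms is a faithful (and more explicit) expansion of that same strategy.
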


\begin{rem}
In the construction of the Weyl sequence as in the proof of usual HVZ theorem, we regard $\mathcal H=\bigoplus_{j=1}^mL^2(\mathbb R^n)$ as $L^2(X_a)\otimes \left(\bigoplus_{j=1}^mL^2(X^a)\right)$.\\
\end{rem}

For $\lambda\geq \Sigma$ we define
$$d(\lambda):=\inf\{\lambda-\tau\ \vert\ \tau\leq \lambda,\ \tau\in \mathcal T\}.$$

We have the Mourre estimate and propagation estimates for $P$.
\begin{thm}\label{main2}
Suppose Assumption \ref{as4}. Then we have
\begin{itemize}
\item[(i)] $P$ does not have singular continuous spectrum, $\mathcal T\cup \sigma_{\mathrm{pp}}(P)$ is a closed countable set and $\sigma_{\mathrm{pp}}(P)$ can accumulate only at $\mathcal T$.
\item[(ii)] For any $\lambda\geq \Sigma$ and $\epsilon>0$ there exist $\delta>0$ and a compact operator $K$ such that
$$E_P(I)i[P,\mathcal A]E_P(I)\geq 2(d(\lambda)-\epsilon)E_P(I)+K,$$
where $\mathcal A=\mathrm{diag}(A,\dots,A)$, $A:=(x\cdot p+p\cdot x)/2$ and $I:=(\lambda-\delta,\lambda+\delta)$.
\item[(iii)] Let $\lambda>\Sigma$ such that $\lambda\notin\mathcal T\cup\sigma_{\mathrm{pp}}(P)$ and $\epsilon>0$ be given. Then for $f\in C_0^{\infty}(\mathbb R)$ supported in a sufficiently small neighborhood of $\lambda$ and any $s'>s>0$ we have
$$\left(\phi\left(\frac{x^2}{4t^2}<d(\lambda)-\epsilon\right)\mathbf 1\right)e^{-itP}f(P)(\langle x\rangle^{-s'}\mathbf 1)=O(t^{-s}),$$
as $t\to +\infty$ and there exists a constant $\lambda'>0$ such that
$$\left(\phi\left(\frac{x^2}{t^2}>\lambda'\right)\mathbf 1\right)e^{-itP}f(P)(\langle x\rangle^{-s}\mathbf 1)=O(t^{-s}),$$
as $t\to +\infty$.
\end{itemize}
\end{thm}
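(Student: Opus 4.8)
The plan is to make the Mourre estimate (ii) the centerpiece: granting it, part (i) follows from abstract conjugate‑operator theory and part (iii) from Skibsted's abstract propagation theory \cite{Sk}, as indicated in the introduction. To prove (ii) I would argue by induction over the finite semilattice $\mathbb A$, following the classical $N$‑body commutator argument (see, e.g., \cite[Ch.~5]{DG}); the features special to the present setting — the matrix structure and the fact that the coupling $R^a$ consists of first‑order operators — enter only through relatively compact error terms, since the conjugate operator $\mathcal A^a=\mathrm{diag}(A^a,\dots,A^a)$ is scalar and commutes with the matrix structure. First I would record the preliminaries. By Assumption \ref{as4}, each $P^a$ is self‑adjoint on $\bigoplus_j H^2(X^a)$ (the operator $R^a$ being infinitesimally $\mathrm{diag}(-\Delta^a)$‑bounded), $\sigma_{\mathrm{ess}}(P^a)=[\Sigma^a,\infty)$ by the Proposition above, and $P^a$ lies in the class $C^\infty(\mathcal A^a)$ — which is exactly what the bound $|(x^b\cdot\nabla^b)^l(\cdot)|=o(1)$ for \emph{all} $l$ provides, because iterated commutators of $\mathcal A^a$ with $-\Delta^a$, with $V^a_j$ and with $\tilde r^b_{jk}\cdot\nabla^b+\hat r^b_{jk}$ stay of the same three types. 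A computation with the dilation group then yields $i[-\Delta^a,\mathcal A^a]=2(-\Delta^a)$, $i[V^a_j,A^a]=-(x^a\cdot\nabla^a)V^a_j$, $i[\hat r^b_{jk},A^a]=-(x^b\cdot\nabla^b)\hat r^b_{jk}$ and $i[\tilde r^b_{jk}\cdot\nabla^b,A^a]=(\tilde r^b_{jk}-(x^b\cdot\nabla^b)\tilde r^b_{jk})\cdot\nabla^b$, hence
$$i[P^a,\mathcal A^a]=2(-\Delta^a)\mathbf 1+T^a,$$
where $T^a$ is a first‑order operator with $o(1)$ coefficients; consequently $f(P^a)T^af(P^a)$ is compact for every $f\in C_0^\infty(\mathbb R)$ (a function that is $o(1)$ at infinity times $f(P^a)$ is compact and $f(P^a)\nabla^a$ is bounded), and, by an almost‑analytic extension, $R^a$ being $o(1)$ also gives that $f(P^a)-\mathrm{diag}(f(P^a_j))$ is compact.

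For the induction, fix $a\ne a_{\min}$ and $\lambda$ and assume (i)--(ii) for every $P^b$ with $b<a$. If $\lambda\in\mathcal T^a$ then $d^a(\lambda)=0$ and (ii) is immediate from $E_{P^a}(I)i[P^a,\mathcal A^a]E_{P^a}(I)\ge-\epsilon E_{P^a}(I)+K$ (valid always, since $2(-\Delta^a)\ge0$ and $E_{P^a}(I)T^aE_{P^a}(I)$ is compact), so assume $\lambda\notin\mathcal T^a$ and pick $\delta$ so small that $I:=(\lambda-\delta,\lambda+\delta)$ avoids $\mathcal T^a$ (which is legitimate since $\mathcal T^a$ is closed by the inductive (i)). For $a_{\min}\le b<a$ write $\mathcal H^a=L^2(X^a_b)\otimes\mathcal H^b$ with $X^a_b=X^a\ominus X^b$, so that $\mathcal A^a=\mathcal A^a_b\otimes1+1\otimes\mathcal A^b$ and $P^a=-\Delta_{X^a_b}\otimes1+1\otimes P^b+I_b$, where $I_b$ collects the $v^c_j$ and $r^c_{jk}$ with $c\not\le b$ and vanishes at infinity on the support of the $b$‑th piece of a partition of unity $\{J_b\}_{a_{\min}\le b<a}$ of $X^a$ subordinate to the cluster decomposition (Graf‑type, homogeneous of degree $0$ near infinity; for $b=a_{\min}$ this reduces to the ``free'' region and $I_{a_{\min}}=\mathrm{diag}(v^a_j)+R^a$ vanishes at infinity everywhere). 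By IMS localization, $f(P^a)\,i[P^a,\mathcal A^a]\,f(P^a)=\sum_b f(P^a)J_b\,i[P^a,\mathcal A^a]\,J_bf(P^a)+\text{compact}$ for $f\in C_0^\infty$ equal to $1$ near $\lambda$, and on $\mathrm{Ran}(J_bf(P^a))$ one has, using $2(-\Delta_{X^a_b})\otimes1=2(P^a-1\otimes P^b-I_b)$, the vanishing of $I_b$ there and the compactness of the localized $T^a$, that $i[P^a,\mathcal A^a]$ equals $2(P^a-1\otimes P^b)+1\otimes i[P^b,\mathcal A^b]$ modulo a compact operator. Splitting the spectral decomposition of $1\otimes P^b$ on $(-\infty,\lambda]$ into eigenvalues $\mu$ of $P^b$ — which lie in $\mathcal T^a$, so $2(\lambda-\mu)\ge2d^a(\lambda)$, while the contribution of $1\otimes i[P^b,\mathcal A^b]$ on the eigenspace is controlled through the Virial identity $E_{P^b}(\{\mu\})\,i[P^b,\mathcal A^b]\,E_{P^b}(\{\mu\})=0$ — and into the continuous part, where the inductive Mourre estimate gives $1\otimes i[P^b,\mathcal A^b]\ge2(d^b(\mu)-\epsilon)$ modulo compact on that energy shell and $2(\lambda-\mu)+2d^b(\mu)=2(\lambda-\tau)\ge2d^a(\lambda)$ for $\tau\in\mathcal T^b\subset\mathcal T^a$ realizing $d^b(\mu)$, and using compactness of $\sigma(P^b)\cap(-\infty,\lambda]$ to keep the splitting finite, one obtains
$$E_{P^a}(I)\,i[P^a,\mathcal A^a]\,E_{P^a}(I)\ge2(d^a(\lambda)-\epsilon)E_{P^a}(I)+K.$$
When $\lambda\in\sigma_{\mathrm{pp}}(P^a)$ the Virial theorem is used once more — eigenfunctions of $P^a$ lie in $D(\mathcal A^a)$ by the $C^\infty(\mathcal A^a)$‑regularity — to absorb the eigenspace into $K$ and to conclude $\sigma_{\mathrm{pp}}(P^a)$ is locally finite off $\mathcal T^a$. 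Taking $a=a_{\max}$ gives (ii).

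With (ii) and the $C^\infty(\mathcal A)$‑regularity available, part (i) is routine: the limiting absorption principle for $P$ holds on compact subsets of $(\Sigma,\infty)\setminus(\mathcal T\cup\sigma_{\mathrm{pp}}(P))$, hence $\sigma_{\mathrm{sc}}(P)=\emptyset$; the Virial theorem together with the strict positivity $d(\lambda)>0$ of the Mourre constant for $\lambda\notin\mathcal T$ forces $\sigma_{\mathrm{pp}}(P)$ to accumulate only at $\mathcal T$; and $\mathcal T\cup\sigma_{\mathrm{pp}}(P)$ is countable ($\mathbb A$ finite, each $\sigma_{\mathrm{pp}}(P^b)$ countable by the same theorem applied to $P^b$, $\{c_j\}$ finite) and closed (accumulation points of $\sigma_{\mathrm{pp}}(P^b)$ lie in $\mathcal T^b\subset\mathcal T$). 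For (iii), the Mourre estimate (ii), the $\mathcal A$‑regularity and the boundedness of $f(P)[[P,\mathcal A],\mathcal A]f(P)$ afforded by Assumption \ref{as4} are precisely the input of the abstract theory of \cite{Sk}, which outputs the large‑velocity estimate and the minimal‑velocity estimate; since $i[P,\mathcal A]=2(-\Delta)\mathbf 1+T$ with $T$ of lower order, the relevant classical velocity is $|x|/(2t)$ and the threshold in the minimal‑velocity estimate is $\sqrt{d(\lambda)}$, which is exactly the region $\phi(x^2/(4t^2)<d(\lambda)-\epsilon)$ in the statement; the hypothesis $\lambda\notin\mathcal T\cup\sigma_{\mathrm{pp}}(P)$ removes the eigenvalue obstruction so a genuine neighborhood of $\lambda$ works.

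The hard part is the inductive step for (ii): carrying the geometric partition‑of‑unity argument through so that \emph{all} the non‑decaying pieces — the potential parts $v^c_j$ and, more delicately, the first‑order off‑diagonal operators $r^c_{jk}$, which are controlled only via $|(x^c\cdot\nabla^c)^l(\cdot)|=o(1)$ rather than by genuine short‑range decay — are absorbed into compact errors uniformly in the matrix index, while the spectral splicing of the continuous part of $P^b$ via its inductive Mourre estimate still produces the \emph{sharp} constant $2(d^a(\lambda)-\epsilon)$ rather than a weaker one; in particular one must be careful that the relevant thresholds are the eigenvalues of the \emph{matrix} sub‑Hamiltonians $P^b$ (as in the definition of $\mathcal T^a$), the scalar sub‑structure of the individual components being coupled away by the off‑diagonal terms.
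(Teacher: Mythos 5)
Your handling of parts (i) and (ii) is essentially the paper's: an induction over the finite semilattice $\mathbb A$, following the standard many‑body Mourre argument (\cite[Section 6.4]{DG}), with the matrix structure absorbed because $\mathcal A$ is scalar‑diagonal, the base case replaced by $P_{a_{\min}}=\mathrm{diag}(-\Delta+c_j)$ (so the effective base‑case Mourre constant is $d^{a_{\min}}(\lambda)=\min\{\lambda-c_j:c_j\le\lambda\}$), and the first‑order coupling $R_a(x,D_x)=R-R^a$ treated as the analogue of the intercluster potential, using that its coefficients are $o(1)$ when $\lvert x^b\rvert\to\infty$ for all $b\nleq a$. That matches the paper's sketch.

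Part (iii) of your proposal has a genuine gap. You assert that, once the Mourre estimate and the $C^\infty(\mathcal A)$‑regularity are in hand, Skibsted's abstract theory ``outputs the large‑velocity estimate and the minimal‑velocity estimate'' with threshold $\sqrt{d(\lambda)}$. But Skibsted's theorem only gives propagation estimates for a \emph{chosen} family of observables $A(\tau)$ satisfying a differential inequality of the form \eqref{myeq4.0.1}; what comes for free from the Mourre estimate and the choice $\tilde{\mathcal A}(\tau)=\mathcal A-\gamma_0'\tau\mathbf 1$ is only a low‑velocity estimate in the $\mathcal A$‑variable, i.e.\ \eqref{myeq4.1}. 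Converting this to an estimate on $x^2/t^2$ requires a second, nontrivial observable $\mathcal A'(\tau)$ built from a function $M=M(x,\tau)$, and verifying the Skibsted inequality for it forces one to control the commutator of $R(x,D_x)$ with that function. In the single‑cluster case (Theorem~\ref{main}~(ii)) the paper uses $a(x)$, but in the many‑body case the coefficients $\tilde r^b_{jk}(x^b)$ do \emph{not} decay in all directions of $x$ — only as $\lvert x^b\rvert\to\infty$ — so that step fails with $a(x)=\lvert x\rvert^2/4$. This is precisely why the paper introduces Graf's convex function $G$ (Lemma~\ref{Graf}) and sets $M=\bigl(\gamma_0''/2-\tfrac{S^2}{2\tau^2}G(x/S)\bigr)^{1/2}$: property (3) of Lemma~\ref{Graf}, namely that $G$ depends only on $x_a$ where $\lvert x^a\rvert$ is small, forces $\nabla^bG(x/S)$ to be supported where $\lvert x^b\rvert\gtrsim S$, so $\tilde r^b_{jk}(x^b)\to0$ there as $S\to\infty$ and the localized commutator $\hat\chi(-M)\,\tilde{\mathcal R}\,\hat\chi(-M)$ can be made smaller than $\gamma_0'-\gamma_0''$. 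Without this construction the error from the off‑diagonal first‑order terms is not small and the claimed minimal‑velocity estimate with constant $d(\lambda)-\epsilon$ does not follow. Also, property (1) of Lemma~\ref{Graf} ($\max\{x^2,C_1\}\le 2G(x)\le x^2+C_2$) is what lets one translate the resulting cutoff $\chi\bigl(\tfrac{S^2}{2t^2}G(x/S)-d(\lambda)<-\epsilon\bigr)$ into $\phi(x^2/4t^2<d(\lambda)-\epsilon)$. Your outline omits all of this, which is the real content of the proof of~(iii).
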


\begin{rem1}
(1) In the Mourre estimate singularities of potentials could be allowed (see \cite{DG}). In the propagation estimate (iii), proving the Mourre estimate for another conjugate operator using Graf's vector field, singularities of potentials would be allowed. Here we use the usual conjugate operator $A$ for the simple proof of the Mourre estimate.\\
(2) In general, the propagation estimate as in Corollary \ref{maincor} can not seem to hold for many-body operators because the interactions between the components of $u\in \mathcal H$ may not decay as time passes.
\end{rem1}

\section{The Mourre estimate}
In this section we prove the Mourre estimates for $P$ under the Assumptions  \ref{as1} and \ref{as2} or \ref{as3}. First, we consider the case of Assumption \ref{as3}. We can find functions $\chi_j(x)\in C^{\infty}(\mathbb R^n)$ satisfying the following conditions. $\chi_j(x)$ are homogeneous of degree zero for $\lvert x\rvert\geq 1/2$, the map $\omega\mapsto \chi_j(\omega)$ satisfies $\chi_j(\omega)=1$ in a sufficiently close neighborhood of $\omega_k$ such that $j\in J_{\omega_k}$, that is, $V_j(\omega_k)=\lambda$ and $\chi_j(\omega)=0$ in neighborhoods of the other $\omega_k$.
%Moreover, by Assumption \ref{as3} we can chose $\chi_j$ so that there will be $\alpha'>0$ such that if $\chi_j(R\omega)=1$, then setting $\tilde J_{\omega}:=\{k\in \mathbb N\ \vert\ 1\leq k\leq m,\ \chi_k(R\omega)=1\}$, we have $\nabla \tilde V_j(R\omega)\cdot(\sum_{k\in \tilde J_{\omega}}\nabla \tilde V_k(R\omega))>\alpha'$.

For $\beta>0$ we set $a(x):=(1-2\beta\sum_{j=1}^m\tilde V_j(x)\chi_j(x))\lvert x\rvert^2/4$ and
\begin{equation}\label{myeq3.0.0.0}
A_V:=i[-\Delta,a]=\nabla a\cdot p+p\cdot\nabla a,
\end{equation}
where $p=-i\nabla$. Then it is easy to see that $A_V$ is essentially selfadjoint on $C_0^{\infty}(\mathbb R^n)$. We have the Mourre estimate for $\tilde P_j:=-\Delta+\tilde V_j$ and $A_V$ at $\lambda$ in Assumption \ref{as3}.
\begin{lem}\label{Mourre1}
Suppose the Assumption \ref{as3} and let $A_V$ be as in \eqref{myeq3.0.0.0}. Then for $j=1,\dots,m$ and $\beta>0$ sufficiently small, there are constants $\tilde \delta_j,\tilde \gamma_j>$ and compact operators $K_j$ such that
$$E_{\tilde P_j}(I_j)i[\tilde P_j,A_V]E_{\tilde P_j}(I_j)\geq \tilde\gamma_jE_{\tilde P_j}(I_j)+K_j,$$
where $I_j=(\lambda-\tilde \delta_j,\lambda+\tilde \delta_j)$.
\end{lem}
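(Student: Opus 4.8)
The plan is to compute the commutator $i[\tilde P_j,A_V]$ explicitly and to exploit the homogeneity of $\tilde V_j$ and of the $\chi_j$ for $|x|\ge 1/2$.

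First I would record the identity
$$i[\tilde P_j,A_V]=4p\cdot(\nabla^2 a)p-\Delta^2 a-2\nabla a\cdot\nabla\tilde V_j,$$
obtained from $i[-\Delta,A_V]=4p\cdot(\nabla^2 a)p-\Delta^2 a$ and $i[\tilde V_j,A_V]=-2\nabla a\cdot\nabla\tilde V_j$ by a routine double–commutator computation, and check that the right–hand side is $\tilde P_j$–bounded, so that (together with the essential self–adjointness of $A_V$ on $C_0^\infty$) the Mourre formalism applies. Writing $g:=\sum_{l=1}^m\tilde V_l\chi_l$, so that $a=(1-2\beta g)|x|^2/4$, and using that $\tilde V_l$ and $\chi_l$ — hence $g$ — are homogeneous of degree $0$ for $|x|\ge 1/2$, so that $x\cdot\nabla\tilde V_j=x\cdot\nabla g=0$ there, one finds for $|x|\ge 1/2$
$$\nabla^2 a=\tfrac12\mathbf 1+\beta M,\qquad \Delta^2 a=\beta\,\Delta(\operatorname{tr}M)=O(|x|^{-2}),\qquad -2\nabla a\cdot\nabla\tilde V_j=\beta|x|^2(\nabla g)\cdot\nabla\tilde V_j=:\beta W,$$
with $M,W$ bounded and homogeneous of degree $0$ (there are harmless compactly supported corrections for $|x|<1/2$). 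Writing $E:=E_{\tilde P_j}(I_j)$, and using that $\langle x\rangle^{-\varepsilon}E$ is compact, the $O(|x|^{-2})$ term and the $|x|<1/2$ corrections are compact after sandwiching, so $i[\tilde P_j,A_V]=4p\cdot(\nabla^2 a)p+\beta W$ modulo compact operators.

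The role of the modification (as in \cite{ACH}) is now transparent: with the plain generator $A$ one would have $i[\tilde P_j,A]=2p^2$ for $|x|\ge 1/2$, whose $E$–expectation is essentially $2(\lambda-\tilde V_j)$ and degenerates precisely along the directions $\omega_k$ with $\tilde V_j(\omega_k)=\lambda$; the extra term $\beta W$ repairs this, since the cutoffs $\chi_l$ force $\nabla g=\sum_{l\in J_{\omega_k}}\nabla\tilde V_l$ near $\omega_k$, whence $W(\omega_k)=\nabla\tilde V_j(\omega_k)\cdot\bigl(\sum_{l\in J_{\omega_k}}\nabla\tilde V_l(\omega_k)\bigr)>0$ by \eqref{myeq2.1.1} for $j\in J_{\omega_k}$, so $W\ge c_0>0$ on a fixed conic neighborhood $U$ of $\{\omega:\tilde V_j(\omega)=\lambda\}$. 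Taking $\beta$ small keeps $\nabla^2 a\ge\tfrac14\mathbf 1$, hence $4p\cdot(\nabla^2 a)p\ge p^2\ge 0$. I would then insert a quadratic partition of unity $\mathbf 1=\Psi_0^2+\psi_-^2+\psi_+^2$ in $x/|x|$, homogeneous of degree $0$ for $|x|\ge 1/2$, with $\Psi_0$ supported in $U$ and $\psi_\pm$ supported in $\{\pm(\tilde V_j-\lambda)\ge\eta_1\}$ for a fixed $\eta_1>0$, and apply the IMS localization formula (the errors are $O(|x|^{-2})$, hence compact after sandwiching). On the $\Psi_0$–piece, $4p\cdot(\nabla^2 a)p\ge 0$ and $\beta W\ge\beta c_0$ give $\ge\beta c_0\,E\Psi_0^2E$ modulo compact. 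On the $\psi_-$–piece, $\tilde V_j\le\lambda-\eta_1$ on the support, so $4p\cdot(\nabla^2 a)p\ge p^2=\tilde P_j-\tilde V_j$, whose sandwiched expectation is $\ge(\eta_1-C\delta_j)\,E\psi_-^2E$ modulo compact, and for $\beta$ small this dominates $\beta W\ge-\beta\|W\|$. The $\psi_+$–piece is the delicate one: since $\operatorname{supp}\psi_+$ lies, outside a compact set, in the classically forbidden region $\{\tilde V_j>\lambda\}$, the operator $\psi_+f(\tilde P_j)$ is compact for $f$ supported in a small enough neighborhood of $\lambda$ — by Agmon–type exponential decay of $f(\tilde P_j)$ in the forbidden cone (equivalently, by a pseudodifferential argument, using $\tilde V_j\in S^0$ and that the symbols of $\psi_+$ and $f(\tilde P_j)$ are disjointly supported at infinity) — so $E\psi_+^2E$ and the whole $\psi_+$–contribution are compact.

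Collecting the three pieces and choosing first $\beta$ and then $\tilde\delta_j:=\delta_j$ small enough yields
$$E_{\tilde P_j}(I_j)\,i[\tilde P_j,A_V]\,E_{\tilde P_j}(I_j)\ \ge\ \tilde\gamma_j\,E_{\tilde P_j}(I_j)+K_j,\qquad \tilde\gamma_j:=\tfrac12\min(\beta c_0,\eta_1)>0,$$
with $K_j$ compact, which is the assertion. The step I expect to be the main obstacle is precisely the last one — showing that the cutoff into the classically forbidden cone $\{\tilde V_j>\lambda\}$ contributes a compact operator — since this is where the argument stops being a direct computation and where the homogeneity of $\tilde V_j$ (which separates the forbidden cone from the classically allowed region at a linear rate) is genuinely used.
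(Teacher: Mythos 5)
Your proof is correct and follows essentially the same route as the paper's (which in turn defers to \cite[Appendix C]{ACH}): compute $i[\tilde P_j,A_V]$, observe that the Hessian of $a$ stays $\ge\frac14\mathbf 1$ for small $\beta$, and split $S^{n-1}$ into a neighborhood of the level set $\{\tilde V_j=\lambda\}$ (where \eqref{myeq2.1.1} gives a pointwise positive lower bound for $W$), the classically allowed cone (where $-\Delta=\tilde P_j-\tilde V_j$ sandwiched by the spectral cutoff is $\ge\eta_1$ up to $O(\tilde\delta_j)$ and compacts), and the forbidden cone. The one step the paper only cites ACH for — that the forbidden-cone contribution is compact — you make explicit via Agmon decay or, more cleanly, the disjointness at spatial infinity of $\operatorname{supp}\psi_+$ and the essential support of the symbol of $f(\tilde P_j)$, which is exactly the mechanism used in ACH.
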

\begin{proof}
As in \cite[Appendix C]{ACH} we calculate
$$i[\tilde P_j,A_V]=-4\sum_{k,l=1}^n\partial_ka_{kl}^{(2)}(x)\partial_l+\beta\lvert x\rvert^2\sum_{k=1}^m\nabla \tilde V_j(x)\cdot\nabla(\tilde V_k(x)\chi_k(x))+G(x),$$
where
$$a_{kl}^{(2)}=\frac{1}{2}\delta_{kl}-\frac{\beta}{2}\partial_k\partial_l\left(\lvert x\rvert^2\sum_{i=1}^m\tilde V_i(x)\chi_i(x)\right),$$
$$G(x)=-(\Delta^2a(x))-\left(1-2\beta\sum_{k=1}^m\tilde V_k(x)\chi_k(x)\right)x\cdot\nabla\tilde V_j(x).$$
Since $x\cdot\nabla\tilde V_j(x)=0$ for $\lvert x\rvert \geq 1$, we have $G(x)\to 0$ as $\lvert x\rvert \to\infty$. For sufficiently small $\beta$ we also have $-4\sum_{k,l=1}^n\partial_ka_{kl}^{(2)}(x)\partial_l\geq -\Delta$. Thus it remains to show
\begin{equation}\label{myeq3.00}
E_{\tilde P_j}(I_j)\left(-\Delta+\beta\lvert x\rvert^2\sum_{k=1}^m\nabla \tilde V_j(x)\cdot\nabla(\tilde V_k(x)\chi_k(x))\right)E_{\tilde P_j}(I_j)\geq\tilde \gamma_jE_{\tilde P_j}(I_j)+K,
\end{equation}
for some $\tilde \gamma_j>0$ and some compact operator $K$.

By the definition of $\chi_j$ and Assumption \ref{as3}, there exist $\epsilon>0$ and $\hat \gamma_j>0$ such that the following holds. We have $\{\omega\in S^{n-1}\ \vert\ \lambda-2\epsilon<\tilde V_j(\omega)<\lambda+2\epsilon\}=\bigcup_{j\in J_{\omega_k}}\Omega_k$ where $\Omega_k\in S^{n-1}$ are disjoint and satisfy the following:
\begin{itemize}
\item[(1)] $\omega_k\in \Omega_k$
\item[(2)] If $\omega\in \Omega_k$, we have $\chi_l(\omega)=1$ for $l\in J_{\omega_k}$ and  $\chi_l(\omega)=0$ for $l\notin J_{\omega_k}$
\item[(3)] For $\omega\in \Omega_k$ we have
\begin{equation}\label{myeq3.01}
\nabla\tilde V_j(\omega)\cdot\sum_{p\in J_{\omega_k}}\nabla\tilde V_p(\omega)>\hat\gamma_j.
\end{equation}
\end{itemize}

As in \cite[Appendix C]{ACH} the sphere $S^{n-1}$ is the union of open sets:
\begin{align*}
&\mathcal O_1=\{\omega\in S^{n-1}\ \vert \ \lambda-2\epsilon<\tilde V_j(\omega)<\lambda+2\epsilon\},\\
&\mathcal O_2=\{\omega\in S^{n-1}\ \vert \ \tilde V_j(\omega)<\lambda-\epsilon\},\\
&\mathcal O_3=\{\omega\in S^{n-1}\ \vert \ \lambda+\epsilon<\tilde V_j(\omega)\}.
\end{align*}
There is a partition of unity of $\mathbb R^n$: $\varphi_1+\varphi_2+\varphi_3+\eta^2=1$, where $\varphi_k$ is homogeneous of degree $0$ for $\lvert x\rvert>3/4$, $\mathrm{supp}\, \eta\subset\{x\ \vert \ \lvert x\rvert<1\}$ and $\mathrm{supp}\, \varphi_k\subset\{x\ \vert \ \lvert x\rvert>1/2,\ x/\lvert x\rvert\in \mathcal O_k\}$.

Let $f\in C_0^{\infty}(\mathbb R)$ be a function such that $f=1$ near a small neighborhood of $\lambda$. Since $E_{\tilde P_j}(I_j)f(\tilde P_j)=E_{\tilde P_j}(I_j)$ for $\tilde \delta_j$ small enough, we only need to prove \eqref{myeq3.00} replacing $E_{\tilde P_j}(I_j)$ by $f(\tilde P_j)$. Moreover we can choose $f$ such that $\mathrm{supp}\, f\subset [\lambda-\epsilon/2,\lambda+\epsilon/2]$.

In the following we denote compact operators by $\tilde K_l, l=1,2,\dots$. We can write
\begin{align*}
&f(\tilde P_j)\left(-\Delta+\beta\lvert x\rvert^2\sum_{k=1}^m\nabla \tilde V_j(x)\cdot\nabla(\tilde V_k(x)\chi_k(x))\right)f(\tilde P_j)\\
&\quad=\sum_{k=1}^3f(\tilde P_j)\varphi_k(x)\left(-\Delta+\beta\lvert x\rvert^2\sum_{k=1}^m\nabla \tilde V_j(x)\cdot\nabla(\tilde V_k(x)\chi_k(x))\right)\varphi_k(x)f(\tilde P_j)+\tilde K_1.
\end{align*}
In the support of $\varphi_1$ using \eqref{myeq3.01} we have,
\begin{equation}\label{myeq3.02}
\begin{split}
&f(\tilde P_j)\varphi_1(x)\left(-\Delta+\beta\lvert x\rvert^2\sum_{k=1}^m\nabla \tilde V_j(x)\cdot\nabla(\tilde V_k(x)\chi_k(x))\right)\varphi_1(x)f(\tilde P_j)\\
&\quad\geq\beta\hat\gamma_jf(\tilde P_j)\varphi_1(x)^2f(\tilde P_j).
\end{split}
\end{equation}
As for $\phi_2$ and $\phi_3$, choosing sufficiently small $\beta$ we can obtain in the same way as in \cite[Appendix C]{ACH},
\begin{equation}\label{myeq3.03}
\begin{split}
&f(\tilde P_j)\varphi_k(x)\left(-\Delta+\beta\lvert x\rvert^2\sum_{k=1}^m\nabla \tilde V_j(x)\cdot\nabla(\tilde V_k(x)\chi_k(x))\right)\varphi_k(x)f(\tilde P_j)\\
&\quad\geq(\epsilon/3)f(\tilde P_j)\varphi_k(x)^2f(\tilde P_j)+\tilde K_k,\ k=2,3.
\end{split}
\end{equation}
Choosing $\tilde \gamma_j=\min\{\beta\hat\gamma_j,\epsilon/3\}$ and adding \eqref{myeq3.02} and \eqref{myeq3.03} we obtain \eqref{myeq3.00} with $E_{\tilde P_j}(I_j)$ replaced by $f(\tilde P_j)$.
\end{proof}

Using Lemma \ref{Mourre1} we can obtain the Mourre estimate for $P$.
\begin{thm}\label{Mourre2}
Suppose the Assumptions \ref{as1} and \ref{as3} and let $A_V$ be as in \eqref{myeq3.0.0.0}. Then there are constants $\delta,\gamma_0>0$ and a compact operator $K$ such that
\begin{equation}\label{myeq3.0}
E_{P}(I)i[P,\mathcal A_V]E_{P}(I)\geq \gamma_0E_{P}(I)+K,
\end{equation}
where $I=(\lambda-\delta,\lambda+\delta)$ and $\mathcal A_V:=\mathrm{diag}(A_V,\dots,A_V)$.
\end{thm}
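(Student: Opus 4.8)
The plan is to reduce the Mourre estimate for $P$ to the single-state estimates of Lemma \ref{Mourre1} by a localization argument, handling the off-diagonal perturbation $R$ and the difference between $\tilde P_j$ and $P_j = \tilde P_j + W_j$ as compact errors after energy cutoff. First I would choose $\delta_j,\gamma_j,K_j$ from Lemma \ref{Mourre1} for each $j$, set $\delta := \min_j \tilde\delta_j$ and $\gamma_0$ slightly smaller than $\min_j \tilde\gamma_j$, and fix $f\in C_0^\infty(\mathbb R)$ with $f\equiv 1$ near $\lambda$ and $\operatorname{supp} f\subset (\lambda-\delta,\lambda+\delta)$, so that $E_P(I) = E_P(I) f(P)$ and likewise $E_{\tilde P_j}(I_j) = E_{\tilde P_j}(I_j) f(\tilde P_j)$. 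By the standard reduction it suffices to prove
$$
f(P)\, i[P,\mathcal A_V]\, f(P) \geq \gamma_0\, f(P)^2 + K
$$
for a compact $K$; the passage from $f(P)$ to $E_P(I)$ costs only a compact error since $f(P)E_P(I) = E_P(I)$.

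Second I would expand $i[P,\mathcal A_V]$. Write $P = \operatorname{diag}(\tilde P_1,\dots,\tilde P_m) + \operatorname{diag}(W_1,\dots,W_m) + R(x,D_x)$. The diagonal principal part contributes $\operatorname{diag}(i[\tilde P_1,A_V],\dots,i[\tilde P_m,A_V])$. The commutator $i[W_j,A_V]$ involves $\nabla a\cdot\nabla W_j$ and lower terms; since $a$ grows quadratically but $\nabla a$ grows linearly and $\partial^\alpha W_j(x) = o(|x|^{-|\alpha|})$, each such term is $o(1)$ at infinity, hence after multiplication by $f(P)$ on both sides (which gives local compactness in the standard way, using that $f(P)(\langle x\rangle^{-1}\mathbf 1)$ and $(\langle x\rangle^{-1}\mathbf 1)f(P)$ are compact by ellipticity and the resolvent estimates) it is a compact operator. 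The same applies to $i[R(x,D_x),\mathcal A_V]$: by Assumption \ref{as1} the coefficients $\hat r_{jk}$ and $\tilde r_{jk}$ and all their generators-of-dilations derivatives are $o(1)$, and $i[R,\mathcal A_V]$ is again a matrix of first-order operators with $o(1)$ coefficients (here one uses $[\,\cdot\,, a]$ at most lowers order appropriately and $|\nabla a|=O(|x|)$ is compensated by the decay), so it too becomes compact when sandwiched by $f(P)$. Likewise $R$ itself satisfies $f(P) R f(P)$ is compact, so $f(P) i[\tilde P_j,A_V] f(P)$ may be replaced by $f(\tilde P_j) i[\tilde P_j,A_V] f(\tilde P_j)$ modulo compacts, because $f(P) - \operatorname{diag}(f(\tilde P_1),\dots,f(\tilde P_m))$ is compact (by the almost-analytic-extension / Helffer--Sjöstrand formula, using that $(P - z)^{-1} - \operatorname{diag}((\tilde P_j - z)^{-1})$ is compact, which in turn follows from the resolvent identity and local compactness of $W_j$, $R$).

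Third, with all these reductions in place, the diagonal entries give
$$
\sum_{j=1}^m E_{jj}\, f(\tilde P_j)\, i[\tilde P_j, A_V]\, f(\tilde P_j)\, E_{jj} \geq \sum_{j=1}^m E_{jj}\bigl(\tilde\gamma_j f(\tilde P_j)^2 + K_j'\bigr) E_{jj} \geq \gamma_0\, \operatorname{diag}(f(\tilde P_1)^2,\dots,f(\tilde P_m)^2) + K''
$$
by Lemma \ref{Mourre1}, and replacing $\operatorname{diag}(f(\tilde P_j)^2)$ by $f(P)^2$ costs another compact error; collecting the compact terms into one operator $K$ yields \eqref{myeq3.0}.

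The main obstacle is the careful bookkeeping of the off-diagonal and lower-order commutator terms: one must verify that every operator of the form $f(P)\, T\, f(P)$ with $T$ a first-order differential operator whose coefficients (and all their dilation-derivatives) are $o(1)$ at infinity is genuinely compact, despite the conjugate operator $\mathcal A_V$ having linearly growing coefficients. The cleanest route is to commute a factor $\langle x\rangle^{-s}$ through $f(P)$ (available since $f(P)\langle x\rangle^{s}\langle x\rangle^{-s}$ can be controlled via the $o(1)$ decay and the pseudodifferential calculus on the symbol $|\xi|^2 + \tilde V_j$, which is of principal type near $\lambda$) and then use that $\langle x\rangle^{-s}\langle D\rangle^{-1}$ is compact. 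One should also double-check that $A_V$ has the correct commutator domain so that all these formal computations — in particular the $C^1(\mathcal A_V)$ or $C^2(\mathcal A_V)$ regularity of $P$ needed to legitimize the Mourre machinery — are justified; this follows from Assumptions \ref{as1} and \ref{as3} together with the iterated estimates on $(x\cdot\nabla)^l$ of the coefficients, exactly as in \cite{ACH}.
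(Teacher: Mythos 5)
Your proposal matches the paper's argument step by step: reduce via a spectral cutoff $f$ so that $E_P(I)$ may be replaced by $f(P)$, absorb the commutators $i[W_j,A_V]$, $i[R,\mathcal A_V]$ and the difference $f(P)-f(\tilde P)=f(P)-\mathrm{diag}(f(\tilde P_1),\dots,f(\tilde P_m))$ into compact errors via the Helffer--Sj\"ostrand formula, apply Lemma \ref{Mourre1} to the diagonal blocks with $\gamma_0$ essentially $\min_j\tilde\gamma_j$, and collect the compact terms. The only material you add beyond the paper's account are the explicit remarks about the compactness mechanism for $f(P)\,T\,f(P)$ and the $C^1(\mathcal A_V)$/$C^2(\mathcal A_V)$ regularity needed to legitimize the Mourre calculus, which are worth checking but do not constitute a different route.
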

\begin{proof}
Let $f\in C_0^{\infty}(\mathbb R)$ be a function such that $f=1$ near a small neighborhood of $\lambda$. As in the proof of Lemma \ref{Mourre1}, we only need to prove \eqref{myeq3.0} replacing $E_P(I)$ by $f(P)$.  In the following we denote compact operators by $K_j,\ j=1,2,\dots$. Note that a matrix of operators is compact if and only if all its elements are compact. Since $D(P)=\bigoplus_{j=1}^mH^2(\mathbb R^n)$, by the Assumptions \ref{as1} and \ref{as3} on the decay of $R(x,D_x)$ and $W_j$, it is easy to see that $f(P)[\mathrm{diag}(W_1,\dots,W_m)+R(x,D_x),\mathcal A_V]f(P)$ is a compact operator. Thus setting $\tilde P:=\mathrm{diag}(-\Delta+\tilde V_1,\dots,-\Delta+\tilde V_m)=\mathrm{diag}(\tilde P_1,\dots,\tilde P_m)$ we have
\begin{equation}\label{myeq3.0.1}
f(P)i[P,\mathcal A_V]f(P)\geq f(P)i[\tilde P,\mathcal A_V]f(P)+K_1,
\end{equation}
where $K$ is compact.

Let $F$ be an almost analytic extension of $f$ (see, e.g., \cite{DG}). Then we have
\begin{equation}\label{myeq3.1}
\begin{split}
f(P)-f(\tilde P)&=\frac{1}{2\pi i}\int \bar\partial_{z}F(z)((z-P)^{-1}-(z-\tilde P)^{-1})dz\wedge d\bar z\\
&=\frac{1}{2\pi i}\int \bar\partial_{z}F(z)(z-P)^{-1}(\mathrm{diag}(W_1,\dots,W_m)+R(x,D_x))\\
&\quad \cdot(z-\tilde P)^{-1}dz\wedge d\bar z.
\end{split}
\end{equation}
By Assumptions \ref{as1} and \ref{as3} we can see easily that the right-hand side of \eqref{myeq3.1} is compact.  Thus we have
\begin{equation}\label{myeq3.2}
f(P)i[\tilde P,\mathcal A_V]f(P)=f(\tilde P)[\tilde P,\mathcal A_V]f(\tilde P)+K_2.
\end{equation}
By the uniqueness of the functional calculus, we can see that
$$f(\tilde P)=\mathrm{diag}(f(\tilde P_1),\dots,f(\tilde P_m)).$$
Therefore, by Lemma \ref{Mourre1} we can see that for $f$ supported in sufficiently small neighborhood of $\lambda$ with $\gamma_0=\min\{\tilde\gamma_1,\dots,\tilde \gamma_m\}$,
\begin{equation}\label{myeq3.3}
f(\tilde P)[\tilde P,\mathcal A_V]f(\tilde P)\geq \gamma_0 f(\tilde P)^2+K_3.
\end{equation}
Using \ref{myeq3.1} again, we obtain
\begin{equation}\label{myeq3.4}
\gamma_0 f(\tilde P)^2=\gamma_0 f(P)^2+K_4.
\end{equation}
Combining \eqref{myeq3.0.1}, \eqref{myeq3.2}, \eqref{myeq3.3} and \eqref{myeq3.4} we obtain the theorem.
\end{proof}

In the case of Assumption \ref{as2} we can prove the Mourre estimate in the same way as Theorem \ref{Mourre2} using Assumption \ref{as2} (2) instead of Lemma \ref{Mourre1}.
\begin{thm}\label{Mourre2'}
Suppose the Assumptions \ref{as1} and \ref{as2} and set $A=(x\cdot p+p\cdot x)/2,$ where $p:=-i\nabla$. Then there is a constant $\delta>0$ and a compact operator $K$ such that
\begin{equation*}
E_{P}(I)i[P,\mathcal A]E_{P}(I)\geq \gamma_0E_{P}(I)+K,
\end{equation*}
where $I=(\lambda-\delta,\lambda+\delta)$, $\gamma_0:=\min\{\gamma_1,\dots,\gamma_m\}$ and $\mathcal A:=\mathrm{diag}(A,\dots,A)$.
\end{thm}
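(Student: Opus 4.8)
The plan is to mimic the proof of Theorem~\ref{Mourre2} verbatim; the only structural change is that, since under Assumption~\ref{as2} the potentials $V_j$ need not decay, the comparison operator is the full diagonal part $\tilde P:=\mathrm{diag}(P_1,\dots,P_m)$ rather than $\mathrm{diag}(-\Delta+\tilde V_1,\dots,-\Delta+\tilde V_m)$, and Assumption~\ref{as2}(2) is used where Lemma~\ref{Mourre1} was used before; note also that here $\mathcal A$ is built from the very same $A$ that appears in \eqref{myeq2.1}, so no modification of the generator of dilations is needed. First I would fix $f\in C_0^\infty(\mathbb R)$ with $f\equiv 1$ near $\lambda$ and $\mathrm{supp}\,f$ so small that $\mathrm{supp}\,f\subset\bigcap_{j=1}^m I_j$. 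Then $E_P(I)f(P)=E_P(I)$ for $\delta$ small, so it suffices to prove $f(P)i[P,\mathcal A]f(P)\geq\gamma_0 f(P)^2+K$ with $K$ compact, and multiplying by $E_P(I)$ on both sides recovers the stated inequality.

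Writing $P=\tilde P+R(x,D_x)$, the commutator splits as $f(P)i[P,\mathcal A]f(P)=f(P)i[\tilde P,\mathcal A]f(P)+f(P)i[R(x,D_x),\mathcal A]f(P)$. Since $\mathcal A=\mathrm{diag}(A,\dots,A)$ and $A=-ix\cdot\nabla-in/2$, each entry $i[r_{jk},A]$ is again a differential operator of order at most one whose coefficients are built from $(x\cdot\nabla)\hat r_{jk}$, $(x\cdot\nabla)(\tilde r_{jk})_i$ and from $\tilde r_{jk},\hat r_{jk}$ themselves, hence are $o(1)$ at infinity by Assumption~\ref{as1}; therefore $f(P)i[R(x,D_x),\mathcal A]f(P)$ is compact, exactly as the analogous term in the proof of Theorem~\ref{Mourre2}, and $f(P)i[P,\mathcal A]f(P)\geq f(P)i[\tilde P,\mathcal A]f(P)+K_1$. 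Next, with $F$ an almost analytic extension of $f$,
$$f(P)-f(\tilde P)=\frac{1}{2\pi i}\int\bar\partial_z F(z)\,(z-P)^{-1}R(x,D_x)(z-\tilde P)^{-1}\,dz\wedge d\bar z,$$
and since $R(x,D_x)$ is a first order operator with $o(1)$ coefficients it is $\tilde P$-compact, so the integrand is compact with the norm bounds in $z$ needed for the integral to converge in norm to a compact operator; hence $f(P)-f(\tilde P)$ is compact and $f(P)i[\tilde P,\mathcal A]f(P)=f(\tilde P)i[\tilde P,\mathcal A]f(\tilde P)+K_2$. By uniqueness of the functional calculus $f(\tilde P)=\mathrm{diag}(f(P_1),\dots,f(P_m))$, so $f(\tilde P)i[\tilde P,\mathcal A]f(\tilde P)=\mathrm{diag}\big(f(P_1)i[P_1,A]f(P_1),\dots,f(P_m)i[P_m,A]f(P_m)\big)$.

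Finally, for each $j$ we have $E_{P_j}(I_j)f(P_j)=f(P_j)$ because $\mathrm{supp}\,f\subset I_j$, so sandwiching \eqref{myeq2.1} between $f(P_j)$'s gives $f(P_j)i[P_j,A]f(P_j)\geq\gamma_j f(P_j)^2+f(P_j)K_jf(P_j)\geq\gamma_0 f(P_j)^2+f(P_j)K_jf(P_j)$; assembling these into a diagonal matrix yields $f(\tilde P)i[\tilde P,\mathcal A]f(\tilde P)\geq\gamma_0 f(\tilde P)^2+K_3$. One further use of the compactness of $f(P)-f(\tilde P)$ gives $\gamma_0 f(\tilde P)^2=\gamma_0 f(P)^2+K_4$, and chaining the displayed inequalities produces $f(P)i[P,\mathcal A]f(P)\geq\gamma_0 f(P)^2+K$, which is the theorem. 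I expect the only delicate point to be the compactness of the off-diagonal contribution $f(P)i[R,\mathcal A]f(P)$ and of the resolvent difference $f(P)-f(\tilde P)$, and that is handled exactly as in Theorem~\ref{Mourre2} through the $o(1)$ decay of $R(x,D_x)$ imposed in Assumption~\ref{as1}; accordingly this is the mildest of obstacles, the rest being the bookkeeping described above.
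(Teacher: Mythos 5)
Your proposal is correct and takes the same route as the paper: the paper's own ``proof'' is the single sentence preceding the theorem, stating that one argues exactly as in Theorem~\ref{Mourre2} with Assumption~\ref{as2}~(2) substituted for Lemma~\ref{Mourre1}, and that is precisely the expansion you carried out (decomposition $P=\tilde P+R$ with $\tilde P=\mathrm{diag}(P_1,\dots,P_m)$, compactness of $f(P)i[R,\mathcal A]f(P)$ and of $f(P)-f(\tilde P)$ via the $o(1)$ decay in Assumption~\ref{as1}, diagonal functional calculus, and sandwiching \eqref{myeq2.1} for each block).
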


By Theorem \ref{Mourre2} (resp., Theorem \ref{Mourre2'}), Assumptions \ref{as1} and \ref{as2} (resp., \ref{as3}), we can prove that there exist the limit of the resolvent $R(z)$ of $P$ as  operators between certain Banach spaces as $z$ tends to continuous spectrum applying the method of Mourre \cite{Mo}. As a result, using Stone's formula we obtain Theorem \ref{main} (i).

Finally, we consider the many-body case.
\begin{proof}[Proof of Theorem \ref{main2} (i) and (ii)]
The proof of Theorem \ref{main2} (i) and (ii) is very similar to that of usual many-body Hamiltonian (see R\cite[section 6.4]{DG}). The  differences are the following points. Set $P_{a_{\min}}:=\mathrm{diag}(-\Delta+c_1,\dots,-\Delta+c_m)$ and for $\ \lambda\geq\min\{c_1,\dots,c_m\}$
$$d^{a_{\min}}(\lambda):=\min\{\lambda-c_j\ \vert\ c_j\leq \lambda\}.$$
Then given $\epsilon>0$ and $\lambda\in \mathbb R$, there exists $\delta>0$ such that
\begin{equation}\label{myeq3.5}
E_{P_{a_{\min}}}(I)i[P_{a_{\min}},\mathcal A]E_{P_{a_{\min}}}(I)\geq (d^{a_{\min}}(\lambda)-\epsilon)E_{P_{a_{\min}}}(I),
\end{equation}
where $I=(\lambda-\delta,\lambda+\delta)$. The estimate \eqref{myeq3.5} follows from that
$$E_{P_{a_{\min}}}(I)=\mathrm{diag}(E_{-\Delta+c_1}(I),\dots,E_{-\Delta+c_m}(I)),$$
that  for $\lambda\geq c_j$ we have
$$E_{-\Delta+c_j}(I)i[-\Delta+c_j,A]E_{-\Delta+c_j}(I)\geq2(\lambda-c_j-\delta)E_{-\Delta+c_j}(I),$$
and that if $\lambda< c_j$, for  $\delta<c_j-\lambda$ and any $\alpha>0$ we have
$$E_{-\Delta+c_j}(I)i[-\Delta+c_j,A]E_{-\Delta+c_j}(I)\geq\alpha E_{-\Delta+c_j}(I),$$
since $E_{-\Delta+c_j}(I)=0$.

Another difference is that we use the decay not only of $(I_j)_a(x):=V_j(x)-V_j^a(x^a)$ but also of $R_a(x,D_x):=R(x,D_x)-R^a(x^a,D_{x^a})$ when $\lvert x^b\rvert\to \infty$ for all $b\nleq a$. We omit the details of the proof.
\end{proof}

\section{Propagation estimates}
In this section we prove Theorem \ref{main} (2), Corollary \ref{maincor} and Theorem \ref{main2}. We apply the abstract theory of propagation estimates in Skibsted \cite{Sk}. We need the following class of functions with a parameter $\tau$ (see \cite[Definition 2.1]{Sk}).
\begin{dfn}\label{dfn1}
Given $\beta,\alpha\geq 0$ and $\epsilon>0$ let $\mathcal F_{\beta,\alpha,\epsilon}$ denote the set of functions $g,\ g(x,\tau)=g_{\beta,\alpha,\epsilon}(x,\tau):=-\tau^{-\beta}(-x)^{\alpha}\chi\left(\frac{x}{\tau}\right),$ defined for $(x,\tau)\in \mathbb R\times \mathbb R^+$ and for $\chi\in C^{\infty}(\mathbb R)$ with the following properties:
\begin{align*}
\chi(x)=1\ \mathrm{for} \ x<-2\epsilon,\ \chi(x)=0\ \mathrm{for}\ x>-\epsilon,\\
\frac{d}{dx}\chi(x)\leq 0\ \mathrm{and}\ \chi(x)+x\frac{d}{dx}\chi(x)=\tilde\chi^2(x),
\end{align*}
where $\tilde \chi(x)\geq 0$ and $\tilde \chi(x)\in C^{\infty}(\mathbb R)$.
\end{dfn}

In the proof of the propagation estimates we shall use the notation $\chi(G<-\epsilon):=\chi(G)$ for a selfadjoint operator $G$ with $\chi$ smooth and satisfying the first three properties enlisted above.

If we confirm Assumption 2.2 (1)-(5) and the assumption in Corollary 2.6 in \cite{Sk} replacing $H$ by $P$ for some selfadjoint operators $A(\tau),B$ and $n_0\geq2,t_0=1,\kappa_0=0,\beta_0>0,n_0-1/2>\alpha_0>0$ applying Theorem 2.4 in \cite{Sk} we obtain for $(\beta,\alpha)=(0,1),\dots,(0,\alpha_0'),(\beta_0,\alpha_0)$ any $\epsilon>0$ and $g(x,\tau)\in\mathcal F_{\beta,\alpha,\epsilon}$
\begin{equation}\label{myeq4.0}
(-g_{\beta,\alpha,\epsilon}(A(\tau),\tau))^{1/2}e^{-itP}f(P)B^{-\alpha/2}=O(1),
\end{equation}
in $\mathcal L(\mathcal H)$ as $t\to \infty$ where $\tau=t+1$ and $\alpha_0':=\max\{n\in \mathbb N\ \vert\ n<\alpha_0\}$.

Since in the following cases to confirm Assumption 2.2 (1)-(5) for $\kappa_0=0$, $t_0=1$ and any $n_0\geq2,\beta_0>0,n_0-1/2>\alpha_0>0$ is easy, we see only the assumption in Corollary 2.6, that is, that
for several operators $A(\tau)$, for any $\beta_0,\alpha_0>0$ and for $f_1\in C_0^{\infty}(\mathbb R)$ supported in sufficiently small neighborhood of $\lambda_0$ in the assumption of Theorem \ref{main} (ii), there exist some $\delta>0$ and $C>0$ such that
\begin{equation}\label{myeq4.0.1}
f_1(P)DA(\tau)f_1(P)\geq B_1(\tau)+B_2(\tau),
\end{equation}
$$B_1(\tau)=O(\tau^{-\delta}),$$
and 
$$\int_0^{\infty}dt\lvert\langle \zeta(t),B_2(\tau)\zeta(t)\rangle\rvert\leq C\lVert \phi\rVert^2,\ \forall \phi\in \mathcal H,$$
where $D A(\tau)=i[P,A(\tau)]+d_tA(\tau)$, $\zeta(t)=(g^{(1)}(A(\tau),\tau))^{1/2}e^{-itP}f(P)B^{-\alpha/2}\phi$, $g(x,\tau)\in \mathcal F_{\beta,\alpha,\epsilon}$ for $(\beta,\alpha)=(0,1),\dots,(0,\alpha_0'),(\beta_0,\alpha_0)$.

\begin{proof}[Proof of Theorem \ref{main} (ii)]
We consider the case of Assumption \ref{as3}. Let $\mathcal A_V$ and $\gamma_0$ be as in Theorem \ref{Mourre2}. Fix $0<\gamma_0'<\gamma_0$ and as in \cite[Example 1]{Sk} set for $\tau>0$, $\tilde{\mathcal A}(\tau):=\mathcal A_V-\gamma_0'\tau\mathbf{1}$. First we shall prove the estimate for $\tilde{\mathcal A}(\tau)$. Using Theorem \ref{Mourre2} and noting that $E_P(\tilde I)K\to0$ as $\tilde \delta\to0$ where $\tilde I:=[\lambda_0-\tilde \delta,\lambda_0+\tilde \delta]$, we can see that $f_1(P)D\tilde{\mathcal A}(\tau)f_1(P)\geq 0$ for $f_1\in C_0^{\infty}(\mathbb R)$ supported in a small neighborhood of $\lambda_0$. Thus  we obtain \eqref{myeq4.0.1}, and therefore \eqref{myeq4.0} for any $\beta_0,\alpha_0>0$ with $A(\tau)$ and $B$ replaced by $\tilde{\mathcal A}(\tau)$ and $\langle\mathcal A_V\rangle$ respectively where $\langle G\rangle:=(\mathbf{1}+G^2)^{1/2}$. Since for $\beta,\alpha\geq 0$ and $1\geq\theta\geq0$ we have
\begin{equation}\label{myeq4.0.2}
-g_{\beta,\alpha,\epsilon}(x,\tau)\geq-\tau^{-\beta}(\epsilon\tau)^{\alpha\theta}g_{0,\alpha(1-\theta),2\epsilon}(x,\tau),
\end{equation}
(see Corollary 2.5 in \cite{Sk}), we obtain
\begin{equation}\label{myeq4.1}
\left(\frac{-\tilde{\mathcal A}(\tau)}{\tau}\right)^{1/2}\chi\left(\frac{\tilde{\mathcal A}(\tau)}{\tau}<-\epsilon\right)e^{-itP}f(P)\langle \mathcal A_V\rangle^{-s}=O(\tau^{-s}),
\end{equation}
for any $s\geq 1/2$ and $\epsilon>0$.

We shall prove the first estimate in Theorem \ref{main} (ii). Given $\epsilon''>0$ let $g(x,\tau)\in \mathcal F_{0,1,\epsilon''}$, that is,
$$g(x,\tau)=x\chi\left(\frac{x}{\tau}\right),$$
as in Definition \ref{dfn1}.
Fix $0<\gamma_0''<\gamma_0'$ and set $\mathcal A'(\tau):=\mathrm{diag}(g(-\tau M,\tau),\dots,g(-\tau M,\tau))$, where $M=M(x,\tau):=\left(\frac{\gamma_0''}{2}-\frac{a(x)}{\tau^2}\right)^{1/2}$.

Then as in \cite[Example 2]{Sk} we have
\begin{equation}\label{myeq4.1.1}
\begin{split}
D\mathcal A'(\tau)=&\left(\frac{\partial}{\partial \tau}g\right)(-\tau M,\tau)+\frac{1}{2}((g^{(1)}(-\tau M,\tau))^{1/2}M^{-1/2}\mathbf{1})\\
&\cdot \left\{\frac{\tilde{\mathcal A}(\tau)}{\tau}+(\gamma_0'-\gamma_0'')\mathbf{1}+\frac{2i}{\tau}[R(x,D_x),a(x)\mathbf{1}]\right\}(M^{-1/2}(g^{(1)}(-\tau M,\tau))^{1/2}\mathbf{1})\\
&+O(\tau^{-1}).
\end{split}
\end{equation}

The first term is nonnegative. As for the second term let $\mathcal R:=\frac{2i}{\tau}[R(x,D_x),a(x)\mathbf{1}]$. Then the element in the $j$-th row and $k$-th column of $\mathcal R$ is $\mathcal R_{jk}=\frac{2i}{\tau}\tilde r_{jk}\cdot\nabla a(x)$.
Let $\hat \chi\in C^{\infty}(\mathbb R)$ a function supported in $(-\infty,-\epsilon''/2)$ such that $\hat\chi=1$ on $(-\infty,-\epsilon'']$. Then we have
$$g^{(1)}(-\tau M,\tau)=\hat\chi(-M)g^{(1)}(-\tau M,\tau).$$
If $\beta$ in $a(x)$ is small enough, we have
\begin{equation*}
\frac{\lvert x\rvert^2}{\tau^2}<C\frac{a(x)}{\tau^2}<C(\gamma_0''-\epsilon''^2/4)
\end{equation*}
for some $C>0$ on $\mathrm{supp}\, \hat\chi(-M)$.  Thus noting that $\lvert \nabla a(x)\rvert\leq C\lvert x\rvert$ for some $C>0$, we can see that there exists $C_0>0$ such that
\begin{equation}\label{myeq4.1.2}
\tau^{-1}\lvert\nabla a\rvert\hat\chi(-M)<C_0,
\end{equation}
uniformly with respect to $\tau$ and $x$.

Let $f_2\in C_0^{\infty}(\mathbb R)$ be a function such that $f_2f_1=f_1$. Then since we have
$$(1-f_2(P))(M^{-1/2}(g^{(1)}(-\tau M,\tau))^{1/2}\mathbf{1})f_1(P)=O(\tau^{-1}),$$
we obtain
\begin{equation}\label{myeq4.1.3}
\begin{split}
&f_1(P)((g^{(1)})^{1/2}M^{-1/2}\mathbf{1})\mathcal R(M^{-1/2}(g^{(1)})^{1/2}\mathbf{1})f_1(P)\\
&\quad =f_1(P)((g^{(1)})^{1/2}M^{-1/2}\mathbf{1})f_2(P)(\hat\chi(-M)\mathbf 1)\mathcal R(\hat\chi(-M)\mathbf 1))f_2(P)\\
&\qquad \cdot(M^{-1/2}(g^{(1)})^{1/2}\mathbf{1})f_1(P)+O(\tau^{-1}).
\end{split}
\end{equation}
Letting $f_3\in C_0^{\infty}(\mathbb R)$ be a function such that $f_3f_2=f_2$, by \eqref{myeq4.1.2} and Assumption \ref{as1} we can see that
\begin{equation}\label{myeq4.1.4}
f_3(P)(\hat\chi(-M)\mathbf 1)\mathcal R(\hat\chi(-M)\mathbf 1))f_3(P)\geq -C_1 K,
\end{equation}
for some $C_1>0$ and a compact operator $K$.

If the support of $f_2$ is small enough, we have $-C_1f_2(P)Kf_2(P)>-\tilde\epsilon\mathbf 1$ with $\tilde\epsilon>0$ satisfying $\gamma_0'-\gamma_0''>\tilde \epsilon$. Therefore, combining \eqref{myeq4.1.1}, \eqref{myeq4.1.3} and \eqref{myeq4.1.4} we obtain
\begin{align*}
f_1(P)D\mathcal A'(\tau)f_1(P)\geq& f_1(P)((g^{(1)})^{1/2}M^{-1/2}\mathbf{1})\left\{\frac{\tilde{\mathcal A}(\tau)}{\tau}+(\gamma_0'-\gamma_0''-\tilde\epsilon)\mathbf{1}\right\}\\
&\cdot(g^{(1)}(-\tau M,\tau))^{1/2}\mathbf{1})f_1(P)+O(\tau^{-1}).
\end{align*}
Thus as in \cite[Example 2]{Sk} observing
$$\frac{\tilde{\mathcal A}(\tau)}{\tau}+(\gamma_0'-\gamma_0''-\tilde\epsilon)\mathbf1\geq\frac{\tilde{\mathcal A}(\tau)}{\tau}\chi^2\left(\frac{\tilde{\mathcal A}(\tau)}{\tau}<-\epsilon'\right),$$
with $\epsilon'=(\gamma_0'-\gamma_0''-\tilde\epsilon)/2$ and using \eqref{myeq4.1}, we obtain \eqref{myeq4.0.1} and therefore \eqref{myeq4.0} for any $\beta_0,\alpha_0>0$ with $A(\tau)$ and $B$ replaced by $\mathcal A'(\tau)$ and $\mathcal \langle \mathcal A_V\rangle^{1+\kappa}$ respectively where $\kappa>0$ is arbitrary.

Therefore, noting that $g_{\beta,\alpha,\epsilon}(\mathcal A'(\tau),\tau)=g_{\beta,\alpha,\epsilon}(-\tau M\mathbf{1},\tau)$ for any $\epsilon>2\epsilon''$ we obtain by \eqref{myeq4.0.2}
$$\left(\chi\left(\frac{a(x)}{t^2}-\frac{\gamma_0''}{2}<-\epsilon\right)\mathbf{1}\right)e^{-itP}f(P)\langle \mathcal A_V\rangle^{-s(1+\kappa)}=O(t^{-s}),$$
for any $\epsilon,\kappa,s>0$. Since $\mathcal A_V^n(P-i)^{-n}(\langle x\rangle^{-n}\mathbf{1})$ is bounded for any $n\in\mathbb N$, and we have $\frac{a(x)}{\tau^2}<C\frac{\lvert x\rvert^2}{\tau^2}$, choosing $\epsilon$ small enough we obtain the first estimate of Theorem \ref{main} (ii) with $\lambda'=(\gamma_0''/2-2\epsilon)/C$.

Next we consider the second estimate. For $v>0$ we set $\hat{\mathcal A}(\tau):=(v\tau-\langle x\rangle)\mathbf{1}$. Then as in \cite[Example 3]{Sk} for $f_1\in C_0^{\infty}(\mathbb R)$ and $v$ large enough we have
$$f_1(P)D\hat{\mathcal A}(\tau)f_1(P)\geq 0,$$
and obtain \eqref{myeq4.0} for any $\beta_0,\alpha_0>0$ with $A(\tau)$ and $B$ replaced by $\hat{\mathcal A}(\tau)$ and $\langle x\rangle$ respectively. Therefore, by \eqref{myeq4.0.2} the second estimate of Theorem \ref{main} (ii) holds. In the same way as above we can obtain the estimates for Assumption \ref{as2}.
\end{proof}

\begin{proof}[Proof of Corollary \ref{maincor}]
By the assumption there exists a function $W(x)\in C_0^{\infty}(\mathbb R^n)$ such that $\inf_{x\in \mathbb R^n}(V_j(x)+W(x))>\lambda_0$. Set $\hat P:=\mathrm{diag}(P_1,\dots,P_j+W,\dots,P_m)$. Let $f_1\in C_0^{\infty}(\mathbb R)$ such that $f_1f=f$. Then we can write
\begin{equation}\label{myeq4.2}
\begin{split}
e^{-itP}f(P)(\langle x\rangle^{-s'}\mathbf{1})=&f_1(\hat P)e^{-itP}f(P)(\langle x\rangle^{-s'}\mathbf{1})\\
&+(f_1(P)-f_1(\hat P))e^{-itP}f(P)(\langle x\rangle^{-s'}\mathbf{1}).
\end{split}
\end{equation}
As for the first term in the right-hand side of \eqref{myeq4.2} we have
$$f_1(\hat P)=\mathrm{diag}(f_1(P_1),\dots,f_1(P_j+W),\dots,f_1(P_m)).$$
Since $P_j+W>\lambda_0$, we have $f_1(P_j+W)=0$ if $f$ and $f_1$ are supported in a sufficiently small neighborhood of $\lambda_0$. Thus we have $E_{jj}f_1(\hat P)=0$.

As for the second term using almost analytic extension $F_1$ of $f_1$ we have
\begin{equation*}
\begin{split}
f_1(P)-f_1(\hat P)&=\frac{1}{2\pi i}\int \bar\partial_{z}F_1(z)(z-P)^{-1}(R(x,D_x)-WE_{jj})\\
&\quad \cdot(z-\hat P)^{-1}dz\wedge d\bar z.
\end{split}
\end{equation*}
By the assumption and the definition of $W$ we have $(f_1(P)-f_1(\hat P))(\langle x\rangle^{\rho}\mathbf{1})\in \mathcal L(\mathcal H)$. Thus we only need to prove
\begin{equation}\label{myeq4.3}
(\langle x\rangle^{-\rho}\mathbf{1})e^{-itP}f(P)(\langle x\rangle^{-s'})=O(t^{-s}),
\end{equation}
as $t\to +\infty$ for $s,s'>0$ such that $s<s'$ and $s\leq \rho$.

We can write
\begin{align*}
(&\langle x\rangle^{-\rho}\mathbf{1})e^{-itP}f(P)(\langle x\rangle^{-s'})\\
&=(\langle x\rangle^{-\rho}\mathbf{1})\left(\chi\left(\frac{x^2}{t^2}< \lambda'\right)\mathbf{1}\right)e^{-itP}f(P)(\langle x\rangle^{-s'})\\
&+(\langle x\rangle^{-\rho}\mathbf{1})\left(\chi\left(\frac{x^2}{t^2}\geq \lambda'\right)\mathbf{1}\right)e^{-itP}f(P)(\langle x\rangle^{-s'}).
\end{align*}
Choosing $f$ supported in a sufficiently small neighborhood of $\lambda_0$ and $\lambda'$ as in Theorem \ref{main} (ii), the first term is estimated as $O(t^{-s})$ for $0<s<s'$. By the estimate
$$\langle x\rangle^{-\rho}\chi\left(\frac{x^2}{t^2}>\lambda'\right)=O(t^{-\rho}),$$
the second term is estimated as $O(t^{-\rho})$, and we obtain \eqref{myeq4.3}.
\end{proof}

\begin{proof}[Proof of Theorem \ref{main2} (iii)]
The proof of Theorem \ref{main2} (iii) is similar to that of Theorem \ref{main} (ii). The estimate \eqref{myeq4.1} is proved for $\tilde {\mathcal A}(\tau)=\mathcal A-\gamma_0'\tau\mathbf{1}$ with any $0<\gamma_0'<2d(\lambda)$ and $\mathcal A=\mathrm{diag}(A,\dots,A)$ where $A:=(x\cdot p+p\cdot x)/2$.

In the definition of $M$ we replace $a(x)$ by $\frac{S^2}{2}G\left(\frac{x}{S}\right)$ where $S>0$ and $G(x)$ is the convex function whose gradient is the Graf's vector field (see, e.g., \cite[Lemma 5.2.7]{DG}). We collect the properties of $G$ we use.
\begin{lem}\label{Graf}
$G(x)$ is a smooth convex function and satisfy the following:
\begin{itemize}
\item[(1)] $\max\{x^2,C_1\}\leq 2G(x)\leq x^2+C_2$ for some $C_1,C_2>0$.
\item[(2)] \begin{equation}\label{myeq4.4}
\partial^{\alpha}(2G(x)-x^2)\in L^{\infty}(\mathbb R^n),\ \forall\alpha\in \mathbb N^n,
\end{equation}
\item[(3)]  there exists a constant $\delta>0$ such that for any $a\in \mathbb A$, if $\lvert x^a\rvert<\delta$, $G(x)$ depends only on $x_a$.
\end{itemize}
\end{lem}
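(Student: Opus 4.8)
The plan is to invoke the construction of Graf's vector field: $G$ is its convex potential, and its existence together with (1)--(3) is established in \cite[Ch.~5]{DG} (see in particular \cite[Lemma 5.2.7]{DG}) for the many-body lattice geometry, so one route is simply to cite it and read off the three statements. If one wants to reprove it, I would proceed as follows. Since $\mathbb A$ is finite, first fix $\delta>0$ so small that the tubes $T_a^\delta:=\{x\in X\ \vert\ \lvert x^a\rvert<\delta\}$, $a\neq a_{\min}$, are mutually compatible, in the sense that for incomparable $a,b$ one has $T_a^{\delta}\cap T_b^{\delta}\subset T_{a\vee b}^{C\delta}$, where $X_{a\vee b}:=X_a\cap X_b$ belongs to the family \eqref{myeq2.3} by closure under intersection and $C$ depends only on the (finitely many) angles among the $X_a$; this nesting is the geometric input that makes the construction go through. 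Then build, by an induction along the lattice $\mathbb A$ (treating the $X_a$ in order of decreasing dimension, modifying $G$ near $X_a$ only after all larger subspaces have been dealt with; the bookkeeping is in \cite{DG}), a smooth function $G$ which on each $T_a^\delta$ differs from $\tfrac12\lvert x_a\rvert^2$ by a smooth function $\psi_a$ of $x^a$ alone --- equal to a constant $c_a>0$ on $T_a^{\delta/2}$ so that $G$ depends only on $x_a$ there --- and which equals $\tfrac12\lvert x\rvert^2$ off $\bigcup_{a\neq a_{\min}}T_a^\delta$; one checks that the $c_a$ can be chosen consistently (e.g.\ all equal to $\delta^2/2$) and the profiles $\psi_a$ interpolated smoothly across the transition layers $\delta/2\le\lvert x^a\rvert\le\delta$.

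The main obstacle is the global convexity of $G$: keeping the Hessian nonnegative while interpolating between the quadratic pieces $\tfrac12\lvert x_a\rvert^2+c_a$ across the transition layers is exactly the content of Graf's theorem, and it uses the compatibility of the tubes from the first step in an essential way --- wherever two quadratic pieces compete, the corresponding indices must be comparable, so that $\Pi_{X_a}$ and $\Pi_{X_b}$ commute and are ordered and a suitable interpolation of the pieces stays convex. Everything else is routine.

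Granting such a $G$, the three properties are immediate. Property (3) holds by construction with the chosen $\delta$ (and is vacuous for $a=a_{\min}$, where $x_{a_{\min}}=x$). For (2): $\partial^\alpha(2G(x)-\lvert x\rvert^2)$ is supported in $\bigcup_{a\neq a_{\min}}T_a^\delta$, and on $T_a^\delta$ one has $2G(x)-\lvert x\rvert^2=2\psi_a(x^a)-\lvert x^a\rvert^2$ with $\psi_a$ fixed and $\lvert x^a\rvert<\delta$, so all derivatives are bounded (and vanish for $\lvert\alpha\rvert\ge3$ on $T_a^{\delta/2}$, where $\psi_a\equiv c_a$); this is \eqref{myeq4.4}. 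For (1): the case $\alpha=0$ of (2) gives $\lvert 2G(x)-\lvert x\rvert^2\rvert\le C_2$; choosing $\psi_a\ge\tfrac12\lvert x^a\rvert^2$ throughout (compatible with $\psi_a=\delta^2/2$ on $T_a^{\delta/2}$ and $\psi_a=\tfrac12\lvert x^a\rvert^2$ near $\lvert x^a\rvert=\delta$) makes $2G(x)-\lvert x\rvert^2\ge0$, so $2G(x)\ge\lvert x\rvert^2$; and $2G\ge C_1$ for a positive constant $C_1$ because $2G\ge\lvert x\rvert^2\ge\delta^2$ when $\lvert x\rvert\ge\delta$, while for $\lvert x\rvert<\delta$ one is in $T_{a_{\max}}^\delta$, where $2G=2c_{a_{\max}}=\delta^2$ on $T_{a_{\max}}^{\delta/2}$ (since $x_{a_{\max}}=0$) and $2G\ge\lvert x\rvert^2\ge\delta^2/4$ on the remaining annulus. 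Smoothness and convexity of $G$ come with the construction.
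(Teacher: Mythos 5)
Your primary route --- citing \cite[Lemma 5.2.7]{DG} for the existence of $G$ and reading off (1)--(3) --- is exactly what the paper does: Lemma \ref{Graf} appears in the text with no proof at all, only the parenthetical pointer to that reference. The expository sketch you add (tube compatibility from the lattice geometry, induction on the dimension of $X_a$, local quadratic pieces $\tfrac12\lvert x_a\rvert^2+c_a$ on the tubes, with convexity as the hard part) is a fair outline of what the cited construction does and is consistent with the paper, but it is strictly supplementary to the approach taken there.
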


The equation \eqref{myeq4.1.1} is obtained with $0<\gamma_0''<\gamma_0'$ and the factors $\tilde{\mathcal A}(\tau)$ and  $\frac{2i}{\tau}[R(x,D_x),a(x)\mathbf{1}]$ replaced by
$$\tilde{\mathcal A}'(\tau):=i\left[\mathrm{diag}(-\Delta\dots,-\Delta),\frac{S^2}{2}G\left(\frac{x}{S}\right)\mathbf{1}\right]-\gamma_0'\tau\mathbf{1}$$
and $\tilde{\mathcal R}:=\frac{2i}{\tau}\left[R(x,D_x),\frac{S^2}{2}G\left(\frac{x}{S}\right)\mathbf{1}\right]$ respectively. As for $\tilde{\mathcal A}'(\tau)$ by \eqref{myeq4.4} we get $\lVert\tau^{-1}(\tilde{\mathcal A}'(\tau)-\tilde{\mathcal A}(\tau))f_2(P)\rVert_{\mathcal L(\mathcal H)}\leq CS\tau^{-1}$ for some $C>0$ with $f_2$ as in the proof of Theorem \ref{main} (ii). Thus we can replace $\tilde{\mathcal A}'(\tau)$ by $\tilde{\mathcal A}(\tau)$.

As for $\tilde{\mathcal R}$, the element in $j$-th row and $k$-th column of $\tilde{\mathcal R}$ is
$$\tilde{\mathcal R}_{jk}=\frac{iS}{\tau}\sum_{b\in \mathbb B}\tilde r_{jk}^b(x^b)\nabla^bG\left(\frac{x}{S}\right).$$
Let $\hat\chi$ be the function introduced in the proof of Theorem \ref{main} (ii). We write
\begin{equation}\label{myeq4.5}
\frac{iS}{\tau}\nabla^bG\left(\frac{x}{S}\right)\hat\chi(-M)=\frac{i}{\tau}\left(S\nabla^bG\left(\frac{x}{S}\right)-x^b\right)\hat\chi(-M)+\frac{i}{\tau}x^b\hat\chi(-M).
\end{equation}
Then by \eqref{myeq4.4} we have
$$\left\lvert\frac{i}{\tau}\left(S\nabla^bG\left(\frac{x}{S}\right)-x^b\right)\right\rvert\leq CS\tau^{-1}$$ for some $C>0$. However,  by Lemma \ref{Graf} (1) on $\mathrm{supp}~ \hat\chi(-M)$ we have 
$$C_1\frac{S^2}{\tau^2}\leq\frac{S^2}{\tau^2}G\left(\frac{x}{S}\right)\leq C,$$
for some $C>0$. Thus the first term in the right-hand side of \eqref{myeq4.5} is uniformly bounded with respect to $S$ and $\tau$. Similarly,  by the first inequality in Lemma \ref{Graf} (1), on $\mathrm{supp}~ \hat\chi(-M)$ we have $x^2/\tau^2<C$ for some $C>0$, so that the second term in the right-hand side of \eqref{myeq4.5} is also bounded uniformly bounded with respect to $S$ and $\tau$. By Lemma \ref{Graf} (3) and Assumption \ref{as4} $\tilde r^b_{jk}(x^b)\to0$ as $S\to+\infty$ on $\mathrm{supp}\, \nabla^bG\left(\frac{x}{S}\right)$, so that $\tilde{\mathcal R}_{jk}\hat\chi(-M)\to 0$ as $S\to +\infty$. Thus for $0<\tilde \epsilon<\gamma_0'-\gamma_0''$ if we choose sufficiently large $S$, we have $(\hat \chi(-M)\mathbf{1})\tilde{\mathcal R}(\hat \chi(-M)\mathbf{1})\geq -\tilde \epsilon\mathbf{1}$. 

Since $\gamma_0''$ and $\gamma_0'$ can be arbitrarily close to $2d(\lambda)$, for any  $\epsilon$ if $S$ is sufficiently large, as in the proof or Theorem \ref{main} (ii) we have
$$\left(\chi\left(\frac{S^2}{2t^2}G\left(\frac{x}{S}\right)-d(\lambda)<-\epsilon\right)\mathbf{1}\right)e^{-itP}f(P)\langle \mathcal A\rangle^{-s(1+\kappa)}=O(t^{-s}),$$
for any $\epsilon,\kappa,s>0$. Using the second inequality in Lemma \ref{Graf} (1) we obtain the first estimate in Theorem \ref{main2} (iii). The proof of the second estimate is completely the same as that of Theorem \ref{main} (ii).
\end{proof}

\begin{ack}
This work was supported by JSPS KAKENHI Grant Number JP16J05967.
\end{ack}

\appendix

\section{}
In this appendix we consider the essential spectrum of $P$.
Under the Assumptions \ref{as3} we can determine the essential spectra of $P_j$.
\begin{pro}
Suppose the Assumption \ref{as3}. Then 
$\sigma_{\mathrm{ess}}(P_j)=[\Sigma_j,+\infty)$ where $\Sigma_j:=\min\limits_{\omega\in S^{n-1}}\tilde V_j(\omega)$.
\end{pro}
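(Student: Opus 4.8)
The plan is to reduce the statement $\sigma_{\mathrm{ess}}(P_j)=[\Sigma_j,+\infty)$ to a single-state Schr\"odinger operator with a potential that is homogeneous of degree zero outside a compact set, plus a lower-order decaying perturbation, and then to apply (or reprove) the standard spectral analysis for such operators. Concretely, write $P_j=-\Delta+\tilde V_j+W_j$. By Assumption \ref{as3} (3), $W_j$ together with all its derivatives decays at infinity, so multiplication by $W_j$ is $(-\Delta+1)^{-1}$-compact; hence $W_j$ is relatively compact with respect to $-\Delta+\tilde V_j$ and, by Weyl's theorem, $\sigma_{\mathrm{ess}}(P_j)=\sigma_{\mathrm{ess}}(-\Delta+\tilde V_j)$. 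Thus it suffices to treat the operator $\tilde P_j=-\Delta+\tilde V_j$ with $\tilde V_j$ smooth, bounded, and homogeneous of degree zero for $|x|\ge 1/2$.

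For $\tilde P_j$ the inclusion $\sigma_{\mathrm{ess}}(\tilde P_j)\subset[\Sigma_j,+\infty)$ is immediate: $\Sigma_j=\min_{S^{n-1}}\tilde V_j\le\tilde V_j(x)$ for $|x|\ge 1/2$, and on the remaining ball $\{|x|<1/2\}$ the potential is bounded below, so $\tilde P_j\ge -C$ for some $C$, and more precisely $\tilde P_j\ge\Sigma_j$ modulo a compactly supported bounded perturbation, which does not change the essential spectrum and only contributes at most finitely many eigenvalues below $\Sigma_j$; a cleaner way is to observe $\tilde P_j+K_0\ge\Sigma_j$ for a suitable finite-rank (or compactly supported) $K_0$, whence $\sigma_{\mathrm{ess}}(\tilde P_j)=\sigma_{\mathrm{ess}}(\tilde P_j+K_0)\subset[\Sigma_j,\infty)$. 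For the reverse inclusion $[\Sigma_j,+\infty)\subset\sigma_{\mathrm{ess}}(\tilde P_j)$ I would construct, for each $\mu\ge\Sigma_j$, a singular Weyl sequence. Pick $\omega_0\in S^{n-1}$ with $\tilde V_j(\omega_0)=\Sigma_j$ and a unit vector $\xi\in\mathbb R^n$ with $|\xi|^2=\mu-\Sigma_j$ (and $\xi\cdot\omega_0=0$ if one wants the plane wave to oscillate transversally, though this is not essential). Then take $u_R(x)=R^{-n/2}\psi\big((x-R\omega_0)/\sqrt{R}\big)e^{i\xi\cdot x}$ with $\psi\in C_0^\infty$, $\|\psi\|_{L^2}=1$: as $R\to\infty$ the support of $u_R$ is contained in the cone around $\omega_0$ where $\tilde V_j$ is essentially constant equal to $\Sigma_j$ (with error $o(1)$ by continuity of $\tilde V_j$ on $S^{n-1}$ and the $\sqrt R$-scaling, which keeps the support at angular width $O(1/\sqrt R)$ while $|x|\sim R$), so $\|(\tilde P_j-\mu)u_R\|\to 0$ while $u_R\rightharpoonup 0$ weakly. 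This places $\mu$ in $\sigma_{\mathrm{ess}}(\tilde P_j)$.

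The main obstacle is making the Weyl-sequence estimate for $(\tilde P_j-\mu)u_R$ genuinely go to zero: one must control the term $(-\Delta-|\xi|^2)u_R$, which produces factors from the slowly varying envelope of order $R^{-1/2}$ (from one derivative hitting $\psi((\cdot-R\omega_0)/\sqrt R)$) and $R^{-1}$ (from two), and the term $(\tilde V_j-\Sigma_j)u_R$, whose size is governed by $\sup\{|\tilde V_j(x)-\Sigma_j| : x/|x|\in \text{supp of angular localization}\}$; since $\tilde V_j$ restricted to the sphere is continuous and the angular width of the support shrinks like $R^{-1/2}$, this sup is $o(1)$, but one should check the rate is compatible (any envelope scaling $R^{\alpha}$ with $0<\alpha<1$ works). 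A secondary technical point is handling arbitrary $\mu\ge\Sigma_j$ rather than just $\mu=\Sigma_j$: the plane-wave factor $e^{i\xi\cdot x}$ supplies the extra energy $\mu-\Sigma_j$, and one checks $\|(-\Delta-|\xi|^2)u_R\|\to0$ as above. Finally, one remarks that the $W_j$-step already disposed of the passage from $\tilde P_j$ back to $P_j$, so combining Weyl's theorem with the two inclusions for $\tilde P_j$ yields $\sigma_{\mathrm{ess}}(P_j)=[\Sigma_j,+\infty)$.
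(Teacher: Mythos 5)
Your argument matches the paper's: both establish $[\Sigma_j,\infty)\subset\sigma_{\mathrm{ess}}(P_j)$ via a Weyl sequence consisting of a scaled bump translated far out along a minimizing direction $\omega_0$ and modulated by a plane wave supplying the kinetic energy $\mu-\Sigma_j$, and both obtain the reverse inclusion $\sigma_{\mathrm{ess}}(P_j)\subset[\Sigma_j,\infty)$ by adding a compactly supported potential to lift the operator above $\lambda<\Sigma_j$ and invoking Weyl's theorem. One slip to fix: with the envelope $\psi((x-R\omega_0)/\sqrt R)$ the correct $L^2$-normalizing prefactor is $R^{-n/4}$, not $R^{-n/2}$ (the paper sidesteps this by centering at $k^2\omega_0$ with envelope width $k$, for which $k^{-n/2}$ is the right prefactor); otherwise the argument, including the separate disposal of $W_j$ by relative compactness, is a harmless reorganization of the paper's proof.
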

\begin{proof}

Assume $\lambda\geq \Sigma_j$. Noting for some $\omega_0$ we have $\Sigma_j=\tilde V_j(\omega_0)$, we take $u^k=k^{-n/2}e^{i(\lambda-\Sigma_j)^{1/2}x}\phi\left(\frac{x-k^2\omega_0}{k}\right)$, where $\phi\in C_0^{\infty}(\mathbb R^n)$ satisfy $\lVert\phi\rVert=1$. Then it is easy to see that $u^k$ is a Weyl sequence for $\lambda$ of $P_j$ (see, e.g., \cite{HiS}).

On the contrary, if $\lambda<\Sigma$, then there exist $\tilde W\in C_0^{\infty}(\mathbb R^n)$ and $c>0$ such that $P_j+\tilde W-\lambda>c$ and $\lambda$ is not a spectrum of $P_j+\tilde W$. By Weyl's theorem $\sigma_{\mathrm{ess}}(P_j)=\sigma_{\mathrm{ess}}(P_j+\tilde W)$, so that $\lambda\notin\sigma_{\mathrm{ess}}(P_j)$.
\end{proof}

We can determine the essential spectrum of $P$ from those of $P_j$.
\begin{pro}
Suppose the Assumptions \ref{as1} and \ref{as2} or \ref{as3}. Then $\sigma_{\mathrm{ess}}(P)=\bigcup_{j=1}^m\sigma_{\mathrm{ess}}(P_j)$.
\end{pro}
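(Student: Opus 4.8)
The plan is to show $\sigma_{\mathrm{ess}}(P)=\bigcup_{j=1}^m\sigma_{\mathrm{ess}}(P_j)$ by a perturbation argument based on Weyl's criterion, exploiting the fact that both $R(x,D_x)$ and—under Assumption \ref{as3}—the difference $V_j-\tilde V_j=W_j$ decay at infinity, so that $P$ is a (relatively) compact perturbation of the diagonal operator $\mathrm{diag}(\tilde P_1,\dots,\tilde P_m)$ or of $\mathrm{diag}(P_1,\dots,P_m)$ with the off-diagonal part removed. First I would treat the two inclusions separately. For $\bigcup_j\sigma_{\mathrm{ess}}(P_j)\subset\sigma_{\mathrm{ess}}(P)$, fix $j$ and $\lambda\in\sigma_{\mathrm{ess}}(P_j)$ and take a singular Weyl sequence $(u^k)$ for $P_j$ at $\lambda$ that can be chosen with supports escaping to infinity (in the homogeneous case this is exactly the explicit sequence constructed in the preceding proposition; in general one uses that the bottom of the essential spectrum is of this form, or that any Weyl sequence for a Schrödinger operator can be modified to one escaping to infinity). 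Set $\mathbf u^k=(0,\dots,u^k,\dots,0)\in\mathcal H$ with $u^k$ in the $j$-th slot. Then $(P-\lambda)\mathbf u^k$ has $j$-th component $(P_j-\lambda)u^k\to 0$, and all other components come from applying $r_{kj}(x,D_x)$ to $u^k$; since the coefficients $\tilde r_{kj},\hat r_{kj}$ tend to $0$ at infinity by Assumption \ref{as1} while $\|u^k\|_{H^1}$ stays bounded and $\mathrm{supp}\,u^k\to\infty$, these components also tend to $0$. Hence $(\mathbf u^k)$ is a singular Weyl sequence for $P$ at $\lambda$, giving $\lambda\in\sigma_{\mathrm{ess}}(P)$.

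For the reverse inclusion $\sigma_{\mathrm{ess}}(P)\subset\bigcup_j\sigma_{\mathrm{ess}}(P_j)$, I would take $\lambda\notin\bigcup_j\sigma_{\mathrm{ess}}(P_j)$ and produce a compactly supported potential perturbation that pushes $\lambda$ out of the spectrum. For each $j$, since $\lambda\notin\sigma_{\mathrm{ess}}(P_j)$, the argument used in the previous propositions (adding a suitable $\tilde W_j\in C_0^\infty$) shows there is $\tilde W_j\in C_0^\infty(\mathbb R^n)$ with $\lambda\notin\sigma(P_j+\tilde W_j)$; more precisely one can arrange a single such perturbation so that $(P_j+\tilde W_j-\lambda)$ is boundedly invertible for every $j$ with a uniform bound on the inverse. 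Now consider $P'=P+\mathrm{diag}(\tilde W_1,\dots,\tilde W_m)$. By Assumptions \ref{as1} (and \ref{as3}), the operator $P'-\mathrm{diag}(P_1+\tilde W_1,\dots,P_m+\tilde W_m)=R(x,D_x)$ (resp. $R(x,D_x)+\mathrm{diag}(W_1,\dots,W_m)$, if one instead compares to the homogeneous-potential diagonal operator) is relatively compact with respect to the diagonal operator: indeed, using the resolvent formula together with an almost analytic extension exactly as in the proof of Theorem \ref{Mourre2}, one checks that the difference of resolvents is compact. Hence, by Weyl's theorem, $\sigma_{\mathrm{ess}}(P')=\bigcup_j\sigma_{\mathrm{ess}}(P_j+\tilde W_j)=\bigcup_j\sigma_{\mathrm{ess}}(P_j)$, which does not contain $\lambda$; but also $\lambda$ is not even in $\sigma(\mathrm{diag}(P_1+\tilde W_1,\dots,P_m+\tilde W_m))$, and adding back the relatively compact $R$ keeps $\lambda$ out of the essential spectrum of $P'$, hence out of $\sigma_{\mathrm{ess}}(P)=\sigma_{\mathrm{ess}}(P')$ (the last equality again by relative compactness of $\mathrm{diag}(\tilde W_j)$, which is a relatively compact perturbation since the $\tilde W_j$ have compact support). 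Therefore $\lambda\notin\sigma_{\mathrm{ess}}(P)$.

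The main obstacle I expect is the verification that the off-diagonal first-order part $R(x,D_x)$ is genuinely relatively compact with respect to the diagonal Schrödinger operator on $\mathcal H$—a first-order differential operator is not relatively compact with respect to $-\Delta$ in general, so one must use the decay of the coefficients $\tilde r_{jk},\hat r_{jk}$ at infinity (Assumption \ref{as1}) in an essential way. Concretely, one writes $r_{jk}(z-\mathrm{diag}(P_\ell+\tilde W_\ell))^{-1}$ and factors it as (decaying coefficient)$\cdot(\langle x\rangle^{-\epsilon}\nabla(-\Delta+1)^{-1})\cdot(\text{bounded})$ plus lower order terms, and then invokes that $\langle x\rangle^{-\epsilon}$ times a resolvent of $-\Delta$ is compact; the same estimate is what underlies the compactness statements already used in the proof of Theorem \ref{Mourre2} and in the HVZ-type argument, so this is routine but is the place where the hypotheses are really consumed. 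A secondary technical point is arranging, in the $\supset$ direction, that the Weyl sequence for $P_j$ can be taken to escape to spatial infinity; under Assumption \ref{as3} this is automatic from the explicit construction, and under Assumption \ref{as2} it follows from standard facts about Schrödinger operators with bounded potentials, so no real difficulty arises there.
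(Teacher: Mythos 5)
Your proposal is correct in substance and rests on the same key lemma as the paper — namely that the off-diagonal first-order term $R(x,D_x)$ is relatively compact with respect to the decoupled diagonal operator $P_0:=\mathrm{diag}(P_1,\dots,P_m)$, which you correctly identify as the crux — but the structure of your argument is considerably more roundabout than necessary. The paper's proof is: (a) show $\sigma_{\mathrm{ess}}(P_0)=\bigcup_j\sigma_{\mathrm{ess}}(P_j)$ directly for the \emph{decoupled} operator $P_0$, via Weyl sequences (a Weyl sequence for $P_j$ placed in the $j$-th slot is one for $P_0$; conversely some component of a Weyl sequence for $P_0$ has norm $\geq 1/\sqrt{m}$ along a subsequence and, renormalized, gives a Weyl sequence for some $P_j$); and (b) invoke Weyl's theorem once, via $P_0$-compactness of $R$, to get $\sigma_{\mathrm{ess}}(P)=\sigma_{\mathrm{ess}}(P_0)$. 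Both inclusions follow at once.

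By contrast, your $\subset$ direction constructs Weyl sequences for the full coupled operator $P$, which forces you to insist that the Weyl sequences for $P_j$ escape to spatial infinity so the $o(1)$ decay of the coefficients of $R$ can be used pointwise. This can be arranged, but it is an avoidable extra step — once you decouple via Weyl's theorem first, no escape-to-infinity modification is needed. Your $\supset$ direction is essentially redundant: in it you invoke Weyl's theorem (relative compactness of $R$ w.r.t.\ $\mathrm{diag}(P_j+\tilde W_j)$) plus the elementary direct-sum identity to conclude $\sigma_{\mathrm{ess}}(P')=\bigcup_j\sigma_{\mathrm{ess}}(P_j)$, and then another relative compactness to pass between $P$ and $P'$ — at which point the $\tilde W_j$ have done nothing, since the same two steps applied directly to $P$ and $P_0$ already finish the proof. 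Moreover, the claimed existence of $\tilde W_j\in C_0^\infty$ with $\lambda\notin\sigma(P_j+\tilde W_j)$ is not as automatic as your sketch suggests when $\lambda$ is an isolated eigenvalue inside a spectral gap, a situation that Assumption~\ref{as2} permits (e.g., perturbed periodic potentials); under Assumption~\ref{as3} it is justified by the explicit form of $\sigma_{\mathrm{ess}}(P_j)$, but under Assumption~\ref{as2} it would need a separate argument. Since this whole detour is unnecessary, the cleaner path is to drop it. One further small point: the factorization you sketch at the end, peeling off a $\langle x\rangle^{-\epsilon}$ from the coefficients of $R$, is not available under Assumption~\ref{as1} (which gives only $o(1)$ decay, no rate), but it is also not needed — $g(x)\to 0$ at infinity already makes $g(x)(-\Delta+1)^{-1}$ and $g(x)\nabla(-\Delta+1)^{-1}$ compact.
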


\begin{proof}
Set $P_0:=\mathrm{diag}(P_1,\dots,P_m)$. First, we shall prove $\sigma_{\mathrm{ess}}(P_0)=\bigcup_{j=1}^m\sigma_{\mathrm{ess}}(P_j)$. If $\lambda\in \sigma_{\mathrm{ess}}(P_j)$ for some $j$, we can find a Weyl sequence $u^k$ for $P_j$ and $\lambda$. Then the sequence $\tilde u^k\in \bigoplus_{l=1}^{m}L^2(\mathbb R^n)$ whose $j$-th element is $u^k$ and the others are $0$ is a Weyl sequence for $P_0$ and $\lambda$.

Conversely, if $\tilde u^k=(u_1^k,\dots,u_m^k)$ is a Weyl sequence for $P_0$ and $\lambda$, we have $\lVert u_j^k\rVert\geq 1/\sqrt{m}$  for some $j$ and for infinitely many $k$. Then it is easy to see that choosing a subsequence $u_j^{k_l}$ of $u_j^k$, $u_j^{k_l}/\lVert u_j^{k_l}\rVert$ is a Weyl sequence for $P_j$ and $\lambda$. 

We can write $P=P_0+R(x,D_x)$ and it is easy to see that $R(x,D_x)$ is $P_0$-compact. Thus by the Weyl's theorem $\sigma_{\mathrm{ess}}(P)=\sigma_{\mathrm{ess}}(P_0)$ which completes the proof.
\end{proof}

\bibliography{your_bib_file}

\end{document}